\newcommand{\R}{\mathbb{R}}
\newcommand{\C}{\mathbb{C}}
\newcommand{\N}{\mathbb{N}}
\newcommand{\G}{\mathbb{G}}
\newcommand{\F}{\mathbb{F}}
\newcommand{\Sb}{\mathbb{S}}
\newcommand{\Hb}{\mathbb{H}}
\newcommand{\ket}{\rangle}
\newcommand{\bra}{\langle}
\newcommand{\Li}{\mathcal{L}}
\newcommand{\Ai}{\mathcal{A}}
\newcommand{\Bi}{\mathcal{B}}
\newcommand{\Hi}{\mathcal{H}}
\newcommand{\Ki}{\mathcal{K}}
\newcommand{\GH}{\mathfrak{H}}
\newcommand{\GT}{\mathfrak{T}}
\newcommand{\id}{\operatorname{id}}
\newcommand{\diag}{\operatorname{diag}}
\newcommand{\ep}{\varepsilon}
\newcommand{\Tr}{\operatorname{Tr}}
\newcommand{\SU}{\operatorname{SU}}
\newcommand{\Un}{\operatorname{U}}
\newcommand{\On}{\operatorname{O}}
\newcommand{\Mn}{\operatorname{M}}
\newcommand{\GeL}{\operatorname{GL}}
\newtheorem{thm}{Theorem}[section]
\newtheorem{Lemma}[thm]{Lemma}
\newtheorem{prop}[thm]{Proposition}
\newtheorem{cor}[thm]{Corollary}
\theoremstyle{definition}
\newtheorem{dfn}[thm]{Definition}
\newtheorem{Notation}[thm]{Notation}
\newtheorem{Remark}[thm]{Remark}
\newtheorem{Example}[thm]{Example}
\newtheorem*{Ack}{Acknowledgment}
\numberwithin{equation}{section}
\newcommand{\bone}{\mathbf{1}}
\title{Detailed balance as a quantum-group symmetry of Kraus operators}
\author{Andreas Andersson}
\affil{\small Email: fornjotnr@hotmail.com}
\affil{\footnotesize Max Planck Institute for Mathematics in the Sciences, Inselstrasse 22, D-04103 Leipzig, Germany
\\Wollongong University, School of Mathematics and Applied Statistics, 2522 Wollongong, Australia}
\affil{Mathematics Subject Classification 2010 Primary: 81S22; Secondary: 81R50, 82C10}
\affil{Keywords: Quantum channels, completely positive maps, Stinespring representation, KMS states, time reversal, compact quantum groups, Kraus operators.}
\begin{document}
\maketitle
\abstract
A unital completely positive map governing the time evolution of a quantum system is usually called a quantum channel, and it can be represented by a tuple of operators which are then referred to as the Kraus operators of the channel. We look at states of the system whose correlations with respect to the channel have a certain symmetry. Then we show that detailed balance holds if the Kraus operators satisfy a very interesting algebraic relation which plays an important role in the representation theory of any compact quantum group. 


\section*{Introduction}
Nonequilibrum fluctuation relations can be used to calculate the change in free energy in a system due to a change in externally controlled parameters \cite{Jarz1}, \cite{Jarz2}, \cite{Crook4}. These relations can be derived by comparing the work done in the process with the work done in a ``backward" or ``time-reversed" process. For a classical or a closed quantum system it is rather clear how to define such a time reversal. There is however hope that analogous fluctuation relations may hold also for dissipative quantum systems \cite{Rast1}, \cite{ALMZ1}, \cite{RaZy1}. Therefore, a suitable notion of time reversal is needed also for open quantum systems.

As is standard, we will consider the setting where the time evolution from some initial time $t=0$ up to the next time of interest $t=\tau$ is completely positive and probability preserving. In the Heisenberg picture (in which observables are time dependent, while states are time independent), we are thus discussing a unital completely positive map $\Phi:\Bi(\Hi_0)\to\Bi(\Hi_0)$ (a quantum channel, for short) on the algebra $\Bi(\Hi_0)$ of all bounded observables on the system Hilbert space $\Hi_0$ \cite{AlLe1}. Every quantum channel on $\Bi(\Hi_0)$ has a ``Kraus representation"
$$
\Phi(A)=\sum^n_{k=1}K_k^*AK_k,\qquad \forall A\in\Bi(\Hi_0),
$$
for some $n\in\N\cup\{\infty\}$ and some bounded operators $K_k$ on $\Hi_0$. We shall assume $n<\infty$ in this paper. The unitality $\Phi(\bone)=\bone$ reads $\sum^n_{k=1}K_k^*K_k=\bone$ in terms of the Kraus operators. 

We shall need the whole discrete semigroup $(\Phi^m)_{m\in\N_0}$ obtained from a single channel $\Phi$. Each quantum channel $\Phi^m:=\Phi\circ\cdots\circ\Phi$ gives rise to a representation of $\Bi(\Hi_0)$ on a Hilbert space of the form $\Hi_0\otimes\GH_m$, known as the Stinespring representation. 
We shall take the sequence $(\GH_m)_{m\in\N_0}$ as a $\Phi$-dependent model for the environment. This ``Stinespring bath" does not describe a system that exists on its own but rather models the interaction between the system of interest with its total environment. The choice of this bath is based on two desiderata:
\begin{enumerate}[(i)]
\item{We would like it to be possible to deduce the properties of the dynamics directly from the relations among the Kraus operators $K_1,\dots,K_n$.}
\item{We would like the bath model to have a nice mathematical structure which says something about the channel $\Phi$.}
\end{enumerate}
We shall see that (i) and (ii) go hand in hand. It is not enough to consider the Stinespring representation of $\Phi$ alone; it gives only a first approximation for the environment. It is the structure of the whole sequence $(\GH_m)_{m\in\N_0}$ which is the key.

For open systems, time reversal is tightly connected to the property of detailed balance \cite{Agar1}, \cite{FaRe1}, \cite{KFGV}. Both notions are usually defined with respect to a fixed density matrix $\rho_0$ acting on the system Hilbert space $\Hi_0$. The strategy here will be to first deduce what properties the density matrix $\rho_0$ has to satisfy, given that the time evolution is governed by the channel $\Phi$. Then we deduce the minimal algebraic relations required to be satisfied by the Kraus operators for detailed balance to hold with respect to such a state $\rho_0$. The density matrix $\rho_0$ plays the role of asymptotic equilibrium state, and thermodynamical quantities can be defined in terms of $\rho_0$. However, this thermodynamics is \emph{with respect to} the channel $\Phi$ and that is all that is required. In this approach, $\rho_0$ could even be a pure state on $\Bi(\Hi_0)$. 

The algebraic relations imposed on the Kraus operators, supposed to capture the time-reversal properties of the system, ensure that the quantum channel $\Phi$ has a certain quantum symmetry. Ordinary symmetry groups can only be obtained for commuting Kraus operators, which in a sense corresponds to no heat dissipation. When we say that $\Phi$ is symmetric under some compact group $G$ we basically mean ``covariance" of the quantum channel $\Phi$ under $G$ (cf. \cite{Janz1}), and this notion extends to ``compact quantum groups". We shall not need any quantum-group theory in deriving the main results of the paper. However, for the purpose of interpretation it is often easier if there is some well-defined notion of symmetry in the background. Moreover, the assumption of quantum-group symmetry gives further structure to the Stinepring bath model $(\GH_m)_{m\in\N_0}$ that will be used in subsequent applications. Therefore, for useful reference we shall discuss how quantum groups appear already in this paper, since the main reason why they do so concerns time reversal and detailed balance. 

There are some very interesting applications of quantum groups known in the physics literature \cite{Rayc1}, \cite{Bon1}, \cite{BoDa1}, \cite{BKLRRT}, \cite{KiNe1}, \cite{KiNe3}, \cite{ADF1}, \cite{AnFr2}, but all of them involve $q$-deformations of classical Lie groups, specifically $\SU(n)$ and $\operatorname{Sp}(n)$. These phenomenologically discovered models involving $\SU_q(n)$ or $\operatorname{Sp}_q(n)$ can successfully reproduce experimental data in the same way as their classical analogues in more idealistic physical systems (e.g. one modeled by the harmonic oscillator) \cite{FIL1}. The general interpretations of §\ref{interpretsec} apply to these quantum groups as well.

\begin{Ack}
The author thanks Adam Rennie for his nice comments about the paper. 
\end{Ack}

\section{Background}

\subsection{Completely positive maps in quantum physics}
In this paper it will be useful to remember how the time evolution of a quantum system can be obtained by tracing out an environmental degree of freedom (see e.g. \cite[§8.2]{AlFa1}). 

Suppose that $\Hi_0$ is the Hilbert space of a quantum system (briefly, the ``system") that we are interested in, and that $\Ki$ is a Hilbert space of another quantum system (the ``bath") interacting with it. For example, $\Hi_0$ could be the state space of a energy levels in a molecule and $\Ki$ the state space of the bath of vibration modes capable of exchanging phonons with the molecule. All quantum open-system applications use the same setting. The dimension of $\Hi_0$ and $\Ki$ could be finite or infinite but we shall assume all Hilbert spaces to have a countable basis.

Suppose that $\rho$ and $\sigma$ are density matrices describing the states of the system and the bath, respectively. The total system $\Hi_0\otimes\Ki$ is assumed to evolve from $t=0$ to some later time $t=\tau$ via a unitary operator $W\in\Bi(\Hi_0\otimes\Ki)$. The state of the system is obtained by tracing over the bath degrees of freedom, so that after the evolution $W$ has been applied, we get the system state
\begin{equation}\label{traceoverPhi}
\Phi_*(\rho):=(\id\otimes\Tr_\Ki)\big(W(\rho\otimes\sigma)W^*\big).
\end{equation}
The evolution of the system alone therefore defines a map 
$$
\Phi_*:\Bi(\Hi_0)_*\to\Bi(\Hi_0)_*
$$
on the algebra $\Bi(\Hi)_*=\Li^1(\Hi_0)$ of trace-class operators. Writing $\sigma$ in terms of an orthonormal basis $e_1,\dots,e_n$ for $\Ki$ (where, as mentioned, the dimension $n$ of $\Ki$ could be finite or infinite) as
$$
\sigma=\sum_{k=1}^n\sigma_k|e_k\ket\bra e_k|,
$$
where $0\leq\sigma_k\leq 1$, we have
$$
\Phi_*(\rho)=\sum_{j,k=1}^nK_{j,k}\rho K_{j,k}^*,
$$
with
\begin{equation}\label{defKrausops}
K_{j,k}:=\sqrt{\sigma_k}\bra e_j|We_k\ket.
\end{equation}
Relabeling the operators $K_{j,k}$ using only one index $k=1,\dots,d:=n^2$, we write
$$
\Phi_*(\rho)=\sum_{k=1}^dK_{k}\rho K_{k}^*,\qquad\forall\rho\in\Bi(\Hi_0)_*
$$ 
and call the $K_k$'s a set of \textbf{Kraus operators} for $\Phi_*$. The map $\Phi:\Bi(\Hi_0)\to\Bi(\Hi_0)$ defined via the relation
$$
\Tr\big(\Phi(A)\rho\big)=\Tr\big(A\Phi_*(\rho)\big),\qquad\forall A\in\Bi(\Hi_0),\,\rho\in\Bi(\Hi_0)_*
$$
takes the form
\begin{equation}\label{HeisredPhi}
\Phi(A)=(\id\otimes\Tr_\Ki)\big((\bone\otimes\sigma)W^*(A\otimes\bone)W\big)=\sum_{k=1}^dK_{k}^*A K_{k}
\end{equation}
and describes the Heisenberg evolution of the system, i.e. the time evolution of observables (in contrast to the \emph{state} transformer $\Phi_*$, that is). 

Both $\Phi_*$ and $\Phi$ are completely positive maps. Furthermore, since $W$ is an isometry and $\Tr(\sigma)=1$, the Kraus operators satisfy (with convergence in the weak operator topology if the dimensions are inifinite)
\begin{equation}\label{unital}
\sum_{k=1}^dK_{k}^*K_{k}=\bone,
\end{equation}
which says that $\Phi_*$ preserves the trace and that $\Phi$ is unital. From the way the Kraus operators are defined in \eqref{defKrausops} however, even though $W$ is a coisomery, it is not true in general that
\begin{equation}\label{tracepres}
\sum_{k=1}^dK_{k}K_{k}^*=\bone.
\end{equation}
Condition \eqref{tracepres} says precisely that $\Phi$ preserves the trace when restricted to trace-class operators and that $\Phi_*$ is unital\footnote{Even though the identity operator is not trace-class when $\Hi_0$ is infinite-dimensional, when condition \eqref{tracepres} holds we can extend $\Phi_*$ to the algebra of all bounded operators using the Kraus representation.}.

\begin{dfn}\label{Defqchannel}
A completely positive map $\Phi:\Bi(\Hi_0)\to\Bi(\Hi_0)$ is a \textbf{quantum operation} if $\Phi(\bone)\leq\bone$. It is a \textbf{quantum channel} (or \textbf{stochastic map}) if $\Phi(\bone)=\bone$ and \textbf{bistochastic} if $\Phi(\bone)=\bone=\Phi_*(\bone)$. 
\end{dfn}
The relevance of quantum operations is that in general it might not be a valid approximation to regard the system $\Hi_0\otimes\Ki$ as being closed, and some probability might be lost. One could also imaging that a ``selection process" is taking place after the evolution $\rho\to\Phi_*(\rho)$ so that only certain states are maintained in the description. In any case, such a selection process should not lead to a probability increase, i.e. we should have
$$
\Tr\big(\rho\Phi(\bone)\big)=\Tr\big(\Phi_*(\rho)\big)\leq\Tr(\rho),
$$
which is the motivation for the notion of quantum operation as in Definition \ref{Defqchannel}. However, here we shall regard $\Hi_0\otimes\Ki$ as closed and work with quantum \emph{channels} only. The bistochastic operations are very special however, and not what one gets from system-bath models unless the bath is at infinite or (possibly) zero temperature. They have applications though, including the ones that can be put in a ``random unitary" form $\Phi(A)=\sum_k\lambda_k U_k^*AU_k$ with unitaries $U_k$ and (not necessarily positive) real numbers $\lambda_k$ with $\sum_k\lambda_k=1$ (see e.g. \cite{MeWo1}). 

\begin{Remark}[Commuting Kraus operators]\label{commuteKrausremark}
Suppose further that $W$ is the interaction describing a measurement of some jointly measurable selfadjoint operators $A_1,\dots,A_N$ acting in $\Hi_0$, i.e. $[A_\mu,A_\nu]=0$ for all $\mu,\nu$ and 
$$
W=\exp\Big(-i\sum_{\mu=1}^NA_\mu\otimes B_\mu\Big)
$$
for some commuting selfadjoint operators $B_1,\dots,B_N$ acting in $\Ki$. Then with the convention $\lambda^\mu A_\mu:=\sum_{\mu=1}^N\lambda^\mu A_\mu$ for $\lambda=(\lambda^1,\dots,\lambda^N)\in\R^N$ etc., and writing (see \cite{An1})
$$
W=\int_{\R^N}e^{-i\lambda^\mu A_\mu}\otimes\,dE^{\mathbf{B}}(\lambda),
$$
where $E^{\mathbf{B}}$ is the joint spectral measure of the $B_\mu$'s, we get 
$$
K_{j,k}=\bra e_j|e^{-iA_\mu\otimes B^\mu}e_k\ket=\int_{\R^N}\bra e_j|dE^{\mathbf{B}}(\lambda)e_k\ket e^{-i\lambda^\mu A_\mu},
$$
from which it is clear that $[K_{j,k},K_{l,m}]=0$ for all $j,k,l,m$, i.e. the resulting Kraus operators mutually commute. 
\end{Remark}
\begin{Remark}[Nonommuting Kraus operators]\label{NCKrausremark}
As soon as we have two competing interactions, the Kraus operators do not commute. For example, two measurements of noncommuting observables can be described with noncommuting Kraus operators for the whole process. Similarly, if the intrinsic time evolution is not neglected, the Kraus operators at different time points will not commute unless they all commute with the system Hamiltonian. For example, if the density matrix commutes with the Hamiltonian then a phase-damping interaction with the bath is modeled by commuting Kraus operators whereas population loss must involve noncommuting ones (see e.g. \cite{LOMI}).  Starting with a total Hamiltonian $H=H_0\otimes\bone+\bone\otimes H_{\text{bath}}+H_{\text{int}}$ acting in $\Hi_0\otimes\Ki$, the channel $\Phi$ obtained as in \eqref{traceoverPhi} using $W=e^{-itH}$ for some $t$ has Kraus operators \eqref{defKrausops} which do not commute (unless $[H_0,H_{\text{int}}]=0$, which is a trivial case).
\end{Remark}

We shall only work with quantum operations $\Phi$ allowing a Kraus representation with a \emph{finite} number of Kraus operators. Finitely many Kraus operators means that the bath behaves, with respect to the interaction that $\Phi$ corresponds to, like a finite-dimensional quantum system. So we may take the bath Hilbert space to be finite-dimensional and we use the symbol $\GH$ for the finite-dimensional bath. As before, let $n$ the dimension of the bath, so $\GH\cong\C^n$. Then we may identify $\Bi(\GH)$ with $\Mn_n(\C)$ after choosing a basis for $\GH$, and $W\in\Bi(\Hi_0)\otimes\Mn_n(\C)$ is given by an $n\times n$ matrix $\{W_{j,k}\}_{j,k}^n$ of operators $W_{j,k}\in\Bi(\Hi_0)$. 

Moreover, we will always use a particular choice of bath state $\sigma$, namely the pure state 
\begin{equation}\label{purebathstate}
\sigma=|e_1\ket\bra e_1|
\end{equation}
defined by the first member of an orthonormal basis $e_1,\dots,e_n$ for $\GH$. Doing so we obtain only $n$ Kraus operators
\begin{equation}\label{firstcolumnKraus}
K_k=W_{k,1}:=\bra e_k|We_1\ket,
\end{equation}
corresponding to the first column of $W=\{W_{j,k}\}_{j,k=1}^n\in\Bi(\Hi)\otimes\Mn_n(\C)$. The properties of the resulting quantum operation,
\begin{equation}\label{nKrausrep}
\Phi(A)=\sum^n_{k=1}K_k^*AK_k,
\end{equation}
are the same as for general $\sigma$: it is always a channel, i.e. $\Phi(\bone)=\bone$, but not necessarily trace-preserving since \eqref{tracepres} does not hold. Since we are anyway ignoring the details of the bath when using $\Phi$, the first-column choice is natural in that it gives a ``minimal" description of the dynamics, which will be related to the (minimal) Stinespring representation in the next section. Moreover, any quantum channel $\Phi$ on $\Bi(\Hi_0)$ can be obtained in this way by choosing the Kraus operators to correspond to the first column of a suitable unitary matrix $W$ with entries in $\Bi(\Hi_0)$ \cite[§10.3]{Kum1}, see §\ref{Stinesection} below.

\begin{Remark}[Column versus row]\label{colrowremark}
Writing
$$
W=\sum^n_{j,k=1}W_{j,k}\otimes|e_j\ket\bra e_k|
$$
one obtains \eqref{firstcolumnKraus} from 
$$
\Phi(A)=(\id\otimes\Tr_\GH)\big((\bone\otimes|e_1\ket\bra e_1|)W^{-1}(A\otimes\bone)W\big)=\bra e_1|W^{-1}(A\otimes\bone)We_1\ket\qquad (\text{first column})
$$
where we are tracing out the bath in the density matrix $\sigma=|e_1\ket\bra e_1|$, just as in \eqref{traceoverPhi}. Another possibility is to use
$$
\Phi(A):=(\id\otimes\Tr_\GH)(W^{-1}(A\otimes |e_1\ket\bra e_1|)W)\qquad (\text{first row}).
$$
Then $\Phi$ has Kraus operators $K_j=W_{1,j}$ belonging to the first row of $W$, so the two alternatives are obtained from each other by taking the transpose of $W$. In the first-row case, the property $\Phi(\bone)=\bone$ reads $\sum_kK_kK_k^*=\bone$. We shall use the first-column choice so that it is $W$ and not the transpose $W^t$ which represents the unitary evolution. 
\end{Remark}

\begin{Remark}[Discrete time evolution]\label{Nonmarkremark}
The time evolution of many physical systems is not well approximated by the Markovian evolution given by completely positive maps representing the semigroup $\R_+$. Nevertheless, the evolution from the initial state to the final state for the time interval of interest is almost always completely positive. By discretizing time into sufficiently large time intervals, one can describe the time evolution in terms of a single map $\Phi$ and its powers $\Phi^m$. In that sense, semigroups of the form $(\Phi^m)_{m\in\N_0}$ have a much wider applicability than semigroups of the form $(\Phi_t)_{t\in\R_+}$, allowing for some non-Markovianity in continuous time. 
\end{Remark}

\begin{Remark}[Measurement interpretation]\label{POVMrem}
Due to the probability-preservation condition \eqref{unital}, the Kraus operators $K_1,\dots,K_n$ of a quantum channel $\Phi$ can be used to construct a \textbf{POVM} (positive operator-valued measure), viz. the collection $E_1,\dots,E_n$ of positive operators
$$
E_k:=K_k^*K_k
$$
satisfying $\sum_kE_k=\bone$, and can be interpreted as describing a quantum measurement (see e.g. \cite[§2.4]{BP}). In the early days of quantum mechanics, a measurement was taken to be given by projections $P_1,\dots,P_n$ satisfying $\sum_kP_k=\bone$. This implies $P_jP_k=\delta_{j,k}P_k$ and the outcome of the measurement is unambiguous. The generalization obtained using the $E_k$'s gives the possibility that the off-diagonal elements of the density matrix are not deleted in one step. This is particularly interesting when we discuss repeated measurement, as one may follow the discrete time evolution of both off-diagonal and diagonal elements. Eventually the off-diagonals may disappear completely, but the time dependences of finite-time correlations (coherence effects) are usually the interesting data.
\end{Remark}

\begin{Remark}[Fundamental unitaries and symmetries] It was realized by Ojima \cite{Ojim1} that the interaction unitary $W\in\Bi(\Hi_0\otimes\Ki)$ of a von Neumann measurement of a set $A_1,\dots,A_N$ of commuting selfadjoint operators gives a representation on $\Hi_0$ of the so-called multiplicative unitary of a locally compact group $G$ (without losing anything relevant, $G=\R^N$). The fundamental unitary implements the Hopf-algebraic coproduct on the algebra $C(G)$ of continuous functions on $G$. Anticipating from this identification that the ``twisting of coproduct" (which has been a popular way of deforming algebras) would be relevant to the operator deformations resulting from ``disturbance by measurement", the post-measurement algebra is expected to have an explicit description as a deformation of the pre-measurement one (see \cite{An1}). 
Now comes the question whether a fundamental unitary can be used also e.g. when not neglecting the intrinsic evolution of the system during the interaction with a bath (and more generally, when the Kraus operators of the quantum operation do not mutually commute). The interpretation given in this paper together with the results of \cite{An6} answers this question postively, provided that we use the fundamental unitary of a quantum group instead of restricting ourselves to classical groups. 
\end{Remark}

\subsection{Stinespring}\label{Stinesection}
The most fundamental result about completely positive maps is the Stinespring theorem.
\begin{thm}[{Stinespring's theorem, \cite[Thm. 1.1.1]{Ar7}}] Let $\Hi_0$ be a Hilbert space. Then every quantum channel $\Phi:\Bi(\Hi_0)\to\Bi(\Hi_0)$ has the form
$$
\Phi(A)=V^*\pi(A) V,\qquad \forall A\in\Bi(\Hi_0),
$$
where $V:\Hi_0\to\Hi_1$ is an embedding of $\Hi_0$ into a larger Hilbert space $\Hi_1$ and $\pi:\Bi(\Hi_0)\to\Bi(\Hi_1)$ is a map such that $\pi(AB)=\pi(A)\pi(B)$ and $\pi(A^*)=\pi(A)^*$.  
\end{thm}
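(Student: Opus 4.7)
My plan is to use a GNS-style construction, replacing the state in the usual Gelfand--Naimark--Segal procedure by the CP map $\Phi$. First I would form the algebraic tensor product $\Bi(\Hi_0)\odot\Hi_0$ and equip it with the sesquilinear form
$$\bra A\otimes\xi,B\otimes\eta\ket:=\bra\xi,\Phi(A^*B)\eta\ket_{\Hi_0},$$
extended by sesquilinearity. The crucial step is verifying positive semidefiniteness: for $\sum_i A_i\otimes\xi_i$ the form becomes $\sum_{i,j}\bra\xi_i,\Phi(A_i^*A_j)\xi_j\ket$, and this nonnegativity is exactly what the complete positivity of $\Phi$ delivers, since the matrix $(A_i^*A_j)_{i,j}\in\Mn_n(\Bi(\Hi_0))$ is positive (it factors as $A^*A$ for the row $A=(A_1,\dots,A_n)$), so applying $\Phi$ entrywise preserves positivity, which is precisely the statement that the scalar quantity above is nonnegative.

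Next I would quotient by the null subspace $N$ of the form (a subspace by Cauchy--Schwarz) and complete in the induced norm to obtain $\Hi_1$. The representation $\pi$ is then defined on the quotient by $\pi(A)[B\otimes\eta]:=[AB\otimes\eta]$. Multiplicativity and $*$-preservation are immediate from the formula. Boundedness, needed to extend $\pi$ by continuity to the whole of $\Hi_1$, follows from the operator inequality $B^*(\|A\|^2\bone-A^*A)B\geq 0$ together with another application of complete positivity of $\Phi$, which is exactly what is required to compare $\sum_{ij}\bra\eta_i,\Phi(B_i^*A^*AB_j)\eta_j\ket$ with $\|A\|^2\sum_{ij}\bra\eta_i,\Phi(B_i^*B_j)\eta_j\ket$. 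Finally, I would set $V\xi:=[\bone\otimes\xi]$; the unitality $\Phi(\bone)=\bone$ makes $V$ isometric, hence an embedding, and the factorization $V^*\pi(A)V\xi=\Phi(A)\xi$ falls out from a one-line inner-product computation.

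The conceptual heart of the theorem -- and the sole place where the hypothesis goes beyond mere positivity of $\Phi$ -- is the positivity of the sesquilinear form in the first step; everything else is bookkeeping and routine continuity arguments. As a shortcut in the case already at hand, where $\Phi(A)=\sum_{k=1}^n K_k^*AK_k$ has a finite Kraus representation, one can bypass the abstract construction entirely by taking $\Hi_1:=\Hi_0\otimes\C^n$, $\pi(A):=A\otimes\bone$ and $V\xi:=\sum_k K_k\xi\otimes e_k$; the isometry condition $V^*V=\sum_k K_k^*K_k=\bone$ then coincides with unitality \eqref{unital}, and $V^*\pi(A)V=\sum_k K_k^*AK_k=\Phi(A)$ by a direct calculation, giving the theorem for free in the situation that is actually used in the rest of the paper.
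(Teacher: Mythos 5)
Your proof is correct. The paper itself does not prove this theorem---it cites Arveson---but immediately afterwards (in \S\ref{Stinesection}) it presents exactly the concrete construction you give as your ``shortcut'': $\Hi_1=\Hi_0\otimes\GH$ with $\GH\cong\C^n$ spanned by the (adjoints of the) Kraus operators, $\pi(A)=A\otimes\bone$, and $V\xi=\sum_k K_k\xi\otimes K_k^*$, with $V^*V=\sum_kK_k^*K_k=\bone$ being the unitality condition \eqref{unital}. Your GNS-style argument is the standard proof of the general statement and is the content of the cited source; you correctly isolate complete positivity as the ingredient that makes the sesquilinear form on $\Bi(\Hi_0)\odot\Hi_0$ positive semidefinite, and the boundedness and isometry steps are handled properly. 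One caveat worth keeping in mind: the shortcut presupposes that a (finite) Kraus decomposition is already in hand, and for a general normal completely positive map the existence of such a decomposition is itself usually derived from the Stinespring dilation, so the abstract GNS construction is not dispensable if one wants the theorem in the generality stated; in the setting of this paper, where $n<\infty$ is assumed throughout, the shortcut suffices and coincides with what the paper actually uses.
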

There is a ``minimal" such triple $(V,\Hi_1,\pi)$ (essentially unique), and we refer to it as the \textbf{Stinespring representation} of $\Phi$.

If we know a Kraus decomposition $K_1,\dots,K_n$ of $\Phi$ as in \eqref{nKrausrep}, we can construct the space $\Hi_1$ and the maps $V$, $\pi$ as follows \cite{Ar3}. If (and only if) $n$ is the minimal number of Kraus operators that can be used to represent $\Phi$, then $K_1,\dots,K_n$ are linearly independent; let us assume that $n$ is minimal. Then $K_1,\dots,K_n$ span an $n$-dimensional vector space. We let $\GH\cong\C^n$ be the Hilbert space obtained by taking the (adjoints of the) Kraus operators as orthonormal basis vectors. That is, the inner product $\bra\cdot|\cdot\ket$ in $\GH$ is characterized by
\begin{equation}\label{innerproduonStine}
\bra K_j^*|K_k^*\ket:=\delta_{j,k},\qquad \forall j,k=1,\dots,n
\end{equation}
and every element $\phi$ of $\GH$ is of the form $\phi=\sum^n_{k=1}\phi_kK_k^*$ for some numbers $\phi_k\in\C$. We set 
$$
\Hi_1:=\Hi_0\otimes\GH,
$$
and define the isometry $V:\Hi_0\to\Hi_1$ by
$$
V\xi:=\sum^n_{k=1}K_k\xi\otimes K_k^*,\qquad \forall \xi\in\Hi_0,
$$
while the map $\pi$ is just
$$
\pi(A):=A\otimes\bone,\qquad \forall A\in\Bi(\Hi_0).
$$
\begin{Notation}
We will usually write the adjoint Kraus operators as $e_1,\dots,e_n$ when regarded as a basis for $\GH$, while we write $K_1^*,\dots,K_n^*$ for them as elements of $\Bi(\Hi_0)$. 
\end{Notation}
Now let $W\in\Bi(\Hi_0\otimes\GH)$ be a unitary whose first column coincides with $V$. Then \eqref{traceoverPhi} holds with $\sigma=|e_1\ket\bra e_1|$. The Stinesepring representation thus also gives a unitary model for $\Phi$.

It seems a little bit too good to be true that, starting from the map $\Phi$, the unknown bath interaction which corresponds to the dissipative effects can be modeled by the finite degrees of freedom $\GH\cong\C^n$. Indeed, $\Hi_0\otimes\GH$ can only capture the one-step evolution $A\to\Phi(A)$, whereas the multi-step evolution $A\to\Phi^2(A)$ and so on have no unitary model on $\Hi_0\otimes\GH$. Instead we need the Stinespring space $\GH_2$ of $\Phi^2$, which is in general a subspace of $\GH^{\otimes 2}$. In general, we have
\begin{equation}\label{StinespaceGHm}
\GH_m\subseteq\GH^{\otimes m},\qquad m\in\N
\end{equation}
if $\GH_m$ denotes the span of products $K_{j_m}^*\cdots K_{j_1}^*$ of $m$ elements of the $K_j^*$'s, for all $j_1,\dots,j_m\in\{1,\dots,m\}$. The result is that a model for the bath up to time $m\in\N_0$ is provided by $\GH_m$, and there is a unitary operator $W_m$ (which is the restriction of $W^{\otimes m}$ to $\GH_m$), such that $\Phi^m$ is obtained as
\begin{equation}\label{timemtraceover}
\Phi^m(A)=\bra e_1^{(m)}|W_m^*(A\otimes\bone)W_me_1^{(m)}\ket,\qquad \forall A\in\Bi(\Hi_0)
\end{equation}
where $e_1^{(m)}$ is the $m$th tensor power of $e_1$ regarded as an element of $\GH_m$. 

The reason why we have an inclusion $\GH_m\subseteq\GH^{\otimes m}$ is that the space $\GH_m$ is spanned by sums of products $K_{j_1}\cdots K_{j_m}$ of $m$ Kraus operators. Possible relations among the Kraus operators allows us to delete some vectors in $\GH^{\otimes m}$ to obtain $\GH_m$. For instance, if $[K_j,K_k]=0$ then for the basis vectors $e_j,e_k\in\GH$ we have
$$
e_j\otimes e_k-e_k\otimes e_j=0
$$
in the space $\GH_2$ which models $\Phi^2$. In the case $[K_j,K_k]=0$ for all $j,k$, we therefore get that $\GH_m$ is the symmetric (``Bosonic") subspace of $\GH^{\otimes m}$.
\begin{Notation}
We write a word $\mathbf{k}$ in $\{1,\dots,n\}$ as $\mathbf{k}=k_1\cdots k_m$ and refer to $|\mathbf{k}|:=m$ as the \textbf{length} of $\mathbf{k}$. We then write
$$
K_\mathbf{k}:=K_{k_1}\cdots K_{k_m},\qquad K_\mathbf{k}^*:=(K_\mathbf{k})^*=K_{k_m}^*\cdots K_{k_1}^*,
$$
$$
e_\mathbf{k}:=e_{k_1}\otimes \cdots \otimes e_{k_m},
$$
and similarly for other quantities. 
\end{Notation}
\begin{Remark}\label{nonminimalremark}
The products $K_\mathbf{j}:=K_{j_1}\cdots K_{j_m}$ do not necessarily form an orthonormal basis for $\GH_m$, nor do they provide a minimal Kraus representation for $\Phi^m$. They are excessively many but they are convenient for comparison of $\Phi^m$ with $\Phi^l$.
\end{Remark}
The sequence $(\GH_m)_{m\in\N_0}$ has a very important property, namely that \cite{Ar3}
\begin{equation}\label{subprodeq}
\GH_{m+l}\subseteq\GH_m\otimes\GH_l,\qquad \forall m,l\in\N_0,
\end{equation}
saying that $\GH_\bullet=(\GH_m)_{m\in\N_0}$ is a ``subproduct system" \cite{ShSo1}. It is \eqref{subprodeq} which allows us to talk about the large-time limit of the system (see \cite{An5}, \cite{An6}) and $\GH_\bullet$ encodes (most of) the properties of the channel $\Phi$. While there are many different Kraus decomposition of the same channel $\Phi$, the object $\GH_\bullet$ is evidently independent of the choice of such a decomposition.

\section{Symmetries needed for detailed balance}

\subsection{Minimal Kraus relations}

We have seen in Remarks \ref{commuteKrausremark} and \ref{NCKrausremark} how commuting and noncommuting Kraus operators may appear in the reduced evolutions due to different kinds of interactions. We now try to see if the algebraic relations satisfied by the Kraus operators reflect some ``symmetry" which could be used in the study of the semigroup $(\Phi^m)_{m\in\N_0}$ (whose motivation was sketched in Remark \ref{Nonmarkremark}).

The Heisenberg evolution is obtained as in Remark \ref{colrowremark} by tracing over the unitary $W\in\Bi(\Hi_0\otimes\GH)$,
\begin{equation}\label{HeisPhifirstrow}
\Phi(A)=(\id\otimes\Tr_\GH)\big((\bone\otimes |e_1\ket\bra e_1|)W^{-1}(A\otimes\bone)W\big)=\sum^n_{k=1}K_k^*AK_k.
\end{equation}
The $K_k$'s are then exactly the elements of the first column of $W$ viewed as an $n\times n$ matrix with entries in $\Bi(\Hi_0)$. 

Since we want the predual $\Phi_*$ to preserve the trace of density matrices, we want $\Phi(\bone)=\bone$. However, we do not want the tracial state to be invariant under $\Phi_*$ in general. For instance, that should not be the case if the system is driven into an equilibrium state at finite temperature. Thus, in general,
\begin{equation}\label{minimalrels}
\bone=\Phi(\bone)=\sum^n_{k=1}K_k^*K_k,\qquad \bone\ne \Phi_*(\bone)=\sum^n_{k=1}K_kK_k^*.
\end{equation}
These two relations are the most basic ones we can impose on the Kraus operators, corresponding to the most basis properties of the time evolution. We can incorporate some further detail in the second one by specifying that 
\begin{equation}\label{choiceminimalrels}
\sum^n_{k=1}Q_{k,k}^{-1}K_kK_k^*=\bone
\end{equation}
for some positive scalars $Q_{k,k}$ (where we take the inverse $Q_{k,k}^{-1}$ just to avoid that in other formulas). The number $1/Q_{k,k}$ has the interpretation as the rate (or unnormalized probability) of the occurrence of the ``subchannel" evolution $\rho\to K_k\rho K_k^*$ of the density matrix $\rho$. If all the subchannels are equally probable then $Q_{k,k}=1$ for all $k$ and the tracial state defined (if well-defined) by the identity operator $\bone$ on $\Hi_0$ is invariant under $\Phi_*$.

\subsection{The correlation matrix}
In the following we have, as always, fixed a quantum channel $\Phi$ and denote by $\GH_m$ the Hilbert spaces appearing in \eqref{StinespaceGHm}. The following object is of interest in view of the measurement interpretation given in Remark \ref{POVMrem}.
\begin{dfn}\label{corrmatrixdef}
Let $\rho_0$ be a fixed density matrix on $\Bi(\Hi_0)$. The \textbf{correlation matrix} of the state $\rho_0$ on $\Hi_0$ is the matrix $Q/\Tr(Q)\in\Bi(\GH)$ where $Q$ has entries $Q_{j,k}$ defined by (see \cite[§5.1]{Schu1})
\begin{equation}\label{corrmatrixone}
\frac{Q_{j,k}}{\Tr(Q)}:=\Tr(K_j\rho_0K_k^*),
\end{equation}
where we can fix $\Tr(Q)$ by requiring that $\Tr(Q^{-1})=\Tr(Q)$. 
\end{dfn}
It is \eqref{unital} which ensures that the correlation matrix is a density matrix. The diagonal elements of $Q/\Tr(Q)$ give the probabilities of the different measurement outcomes, while the role of an off-diagonal entry $Q_{j,k}/\Tr(Q)$ is to capture the possibility of misinterpreting outcome $j$ as being $k$ (cf. \cite[§10.1]{AlFa1}). The reason why we choose to separate the correlation matrix into a non-density matrix $Q$ and a normalization factor $\Tr(Q)$ is that the diagonal entries of $Q$ will be the numbers appearing in \eqref{choiceminimalrels}, and this may or may not lead to $\Tr(Q)=1$. 

\begin{Remark} An example of an $n\times n$ density matrix which is often written in the form $Q/\Tr(Q)$ with $\Tr(Q^{-1})=\Tr(Q)$ is the Gibbs state of a single spin-$(n-1)/2$ in a constant magnetic field of magnitude $B$. Namely, with all physical constants set to unity, the eigenvalues of the Hamiltonian $H$ are $-(n-1)B/2,-(n-3)B/2,\dots,+(n-1)B/2$ so the Gibbs state at temperature $1/\beta$ is 
$$
\frac{Q}{\Tr(Q)}=\frac{e^{-\beta H}}{\Tr(e^{-\beta H})},
$$
where the normalizing factor is a geometric series
\begin{align*}
\Tr(e^{-\beta H})&=e^{\beta (n-1)B/2}+e^{\beta (n-3)B/2}+\cdots +e^{-\beta (n-1)B/2}=\frac{e^{\beta nB/2}-e^{-\beta nB/2}}{e^{\beta B/2}-e^{-\beta B/2}}
\end{align*}
which satisfies $\Tr(e^{-\beta H})=\Tr(e^{\beta H})$.
\end{Remark}
It seems natural to assume that 
$$
K_k\rho_0K_k^*\ne 0,\qquad \forall k=1,\dots,n,
$$ 
in which case $Q$ is invertible (even if $\rho_0$ is not faithful). 

\begin{Example} If $Q=\bone$ then all outcomes $1,\dots,n$ are equally probable. This does \emph{not} require $\rho_0$ to be maximally mixed (if $\Hi_0$ is infinite-dimensional, there is no such state). 
\end{Example}

Usually, the condition of detailed balance of a dynamics is formulated with respect to a state which is (often a unique) invariant state, so a kind of asymptotic equilibrium state. Also in our approach, the relation of the state $\rho_0$ to $\Phi$ will be far from random. It turns out that we shall need $\rho_0$ to be of the following kind. 

\begin{dfn}\label{assumptonrhonoll} 
Let $p_m:\GH^{\otimes m}\to\GH_m$ be the projection. A density matrix $\rho_0$ on $\Hi_0$ has \textbf{$\Phi$-symmetric correlations} if $K_k\rho_0K_k^*\ne 0$ for all $k=1,\dots,n$ and its correlation matrix $Q$ defined by \eqref{corrmatrixone} satisfies, for all $m\in\N_0$,
\begin{equation}\label{corrmatrix}
\frac{Q_{\mathbf{j},\mathbf{k}}}{\Tr(Q_m)}=\Tr(K_\mathbf{j}\rho_0K_\mathbf{k}^*).
\end{equation}
where we write 
$$
Q_m:=p_mQ^{\otimes m}p_m,\qquad Q_{\mathbf{j},\mathbf{k}}:=\bra e_\mathbf{j}|Q_me_\mathbf{k}\ket,
$$
and (importantly) $Q^{\otimes m}$ is assumed to preserve the subspace $\GH_m\subseteq\GH^{\otimes m}$ (equivalently, $Q_m=Q^{\otimes m}p_m$).
\end{dfn}
Similarly we write $Q^{\mathbf{j},\mathbf{k}}:=\bra e_\mathbf{j}|Q_m^{-1}e_\mathbf{k}\ket$ for brevity.

For a measurement, a particular choice of Kraus representation of the associated channel may be preferred. The correlation matrix clearly depends on the choice of Kraus operators. We prefer to choose the Kraus operators such that the correlation matrix becomes as simple as possible.
\begin{Lemma}[{\cite[§III.B]{NCSB1}}]\label{LemmaChoidiago}
Let $Q$ be the $\Phi$-correlation matrix of any density matrix $\rho_0$. Then the Kraus operators of $\Phi$ can be chosen such that $Q$ is diagonal.
\end{Lemma}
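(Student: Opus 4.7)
The plan is to exploit the well-known non-uniqueness of Kraus decompositions: any two Kraus representations $\{K_k\}_{k=1}^n$ and $\{\tilde K_k\}_{k=1}^n$ of the same channel $\Phi$ (in particular, any two with the same cardinality, padded with zeros if necessary) are related by a unitary $U=(u_{jk})\in\Mn_n(\C)$ via
$$\tilde K_j=\sum_{k=1}^n u_{jk}K_k,$$
a standard fact which, in the setup of §\ref{Stinesection}, just reflects the freedom to choose a different orthonormal basis $e_1,\dots,e_n$ of the Stinespring space $\GH$. Thus we have at our disposal a $\Un(n)$-action on the set of Kraus decompositions of $\Phi$, and the strategy is to show that the associated action on the correlation matrix $Q$ is by unitary conjugation, which is enough to diagonalize $Q$.

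First I would verify that $Q$ is Hermitian. This follows from $\rho_0\geq 0$: for any $v=(v_k)\in\C^n$ and $A:=\sum_k v_kK_k$,
$$\sum_{j,k}\overline{v_j}\,Q_{j,k}\,v_k=\Tr(Q)\,\Tr(A\rho_0A^*)\geq 0,$$
so in fact $Q\geq 0$, and in particular $Q=Q^*$. Next, I would substitute $\tilde K_j=\sum_k u_{jk}K_k$ directly into Definition \ref{corrmatrixdef}, which gives
$$\tilde Q_{j,k}=\Tr(\tilde Q)\,\Tr(\tilde K_j\rho_0\tilde K_k^*)=\sum_{l,m}u_{jl}\,\overline{u_{km}}\,Q_{l,m},$$
that is, $\tilde Q=UQU^*$. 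The normalization condition $\Tr(\tilde Q^{-1})=\Tr(\tilde Q)$ from Definition \ref{corrmatrixdef} is automatically preserved, since both $\Tr(\cdot)$ and the spectrum are invariant under unitary conjugation.

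Finally, by the spectral theorem applied to the Hermitian matrix $Q$, there exists a unitary $U\in\Mn_n(\C)$ such that $UQU^*$ is diagonal. Taking $\tilde K_j:=\sum_k u_{jk}K_k$ yields a Kraus representation of $\Phi$ whose correlation matrix $\tilde Q=UQU^*$ is diagonal, as claimed.

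There is essentially no hard step here: the only ingredient beyond the spectral theorem is the unitary-freedom of Kraus decompositions, which is part of the Stinespring picture already developed in §\ref{Stinesection}. The one small point to keep straight is that although the transformation rule $\tilde Q=UQU^*$ is phrased up to the normalization factor $\Tr(Q)$, this factor is untouched by the argument, so the normalization convention $\Tr(Q^{-1})=\Tr(Q)$ requires no separate verification.
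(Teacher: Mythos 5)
Your argument is correct, and it is essentially the standard one that the paper's citation to \cite[§III.B]{NCSB1} points to: the paper states the lemma without proof, and your route --- positivity of $Q$ from $\rho_0\geq 0$, the unitary gauge freedom $\tilde K_j=\sum_k u_{jk}K_k$ of Kraus decompositions, the transformation law $\tilde Q=UQU^*$, and the spectral theorem --- is exactly how that reference establishes it. The paper's own derivation of the same fact appears not in a proof environment but in Remark \ref{GNSremark}, which works instead in the GNS representation $\pi_{\rho_0}$ and diagonalizes the Choi-type operator $C_\Phi=\pi_{\rho_0}(\Phi)p_0$ à la Størmer, reading off $\rho_0$-orthogonal Kraus operators from its eigenvectors. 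The two approaches are linear-algebraically the same move (diagonalize a positive matrix whose spectrum is that of $Q$, as the paper itself notes via $\lambda_k=Q_{k,k}/\Tr(Q)$), but yours is more elementary and self-contained, while the paper's buys the extra information recorded in the remark: the identification of the spectrum of $Q$ with that of $C_\Phi$, the embedding of the spaces $\GH_m$ into $\Hi_{\rho_0}$ used later, and the vanishing $\Tr(\rho_0 K_j)=0$ for $j\geq 2$ in \eqref{zeroonKraus}. Two cosmetic points: in your positivity computation the operator should be $A=\sum_k\overline{v_k}K_k$ to match the quadratic form $\sum_{j,k}\overline{v_j}Q_{j,k}v_k$ (either convention yields $Q\geq 0$, so nothing is at stake), and you could note explicitly that a unitary change of Kraus operators preserves their linear independence, so minimality of the representation is not disturbed.
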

\begin{Remark}\label{GNSremark}
We can deduce a bit more than Lemma \ref{LemmaChoidiago} using ideas of \cite{Stor1}, in particular \cite[Thm. 4]{Stor1}. Consider the GNS representation $\pi_{\rho_0}:\Bi(\Hi_0)\to\Bi(\Hi_{\rho_0})$ of $\Bi(\Hi_0)$ associated with $\rho_0$. Let 
$\Omega$ be the cyclic implementing vector in the GNS space $\Hi_{\rho_0}$, 
$$
\Tr(\rho_0A)=\bra\Omega|\pi_{\rho_0}(A)\Omega\ket_{\rho_0},\qquad \forall A\in\Bi(\Hi_0),
$$
and let $p_0=|\Omega\ket\bra\Omega|$ be the projection onto the line spanned by $\Omega$. Define the ``Choi matrix" of $\Phi$ (with respect to $\rho_0$) by
$$
C_\Phi:=\pi_{\rho_0}(\Phi)p_0\in\Bi(\Hi_{\rho_0})
$$
where $\pi_{\rho_0}(\Phi)$ is the operator on $\Hi_{\rho_0}$ specified by $\pi_{\rho_0}(\Phi)\pi_{\rho_0}(A)\Omega:=\pi_{\rho_0}(\Phi(A))\Omega$ for all $A\in\Bi(\Hi_0)$. 
The fact that $\Phi$ has finitely many Kraus operators implies that $C_\Phi$ has finite rank, and complete positivity of $\Phi$ implies $C_\Phi\geq 0$. Diagonalizing $C_\Phi$ we obtain orthonormal unit vectors $\phi_k$ in $\Hi_{\rho_0}$ and positive numbers $\lambda_k$ such that
$$
C_\Phi=\sum^n_{k=1}\lambda_k|\phi_k\ket\bra\phi_k|.
$$
By definition of $C_\Phi$ it is clear that we can take the $\phi_k$'s from the dense subspace $\pi_{\rho_0}(\Bi(\Hi_0))\Omega$, so we can write $|\phi_k\ket\bra\phi_k|=S_kp_0S_k^*$ for some $S_k\in\pi_{\rho_0}(\Bi(\Hi_0))$. Letting $K_k$ be an operator on $\Hi_0$ such that 
$$
\pi_{\rho_0}(K_k)=\sqrt{\lambda_k}S_k,
$$
we obtain Kraus operators for $\Phi$ which are orthogonal for $\bra\cdot|\cdot\ket_{\rho_0}$. Thus $Q$ is diagonal for these Kraus operators. Moreover, since the $\phi_k$'s are orthonormal we get $\lambda_k=Q_{k,k}/\Tr(Q)$, and in this way we recover the fact \cite[Prop. 10.4]{AlFa1} that the correlation matrix of $\rho_0$ has the same spectrum as $C_\Phi$. 

By linear independence of the $K_k$'s, at most one of them ($K_1$, say) can be proportional to the identity operator $\bone$. Therefore
\begin{equation}\label{zeroonKraus}
\Tr(\rho_0K_j)=\sqrt{\lambda_j}\bra\Omega|S_j\Omega\ket=\sqrt{\lambda_j}\bra\Omega|\phi_j\ket=0
\end{equation}
for all $j=2,\dots,n$ (and possibly also for $j=1$). 

\end{Remark}
Observe that Lemma \ref{LemmaChoidiago} cannot be applied to the products $K_\mathbf{j}$ which give a Kraus representations for $\Phi^m$ because this Kraus representation is already fixed from the choice of $K_1,\dots,K_n$. As mentioned already in Remark \ref{nonminimalremark}, the $K_\mathbf{j}$'s are far from the minimal Kraus representation of $\Phi^m$. Thus, the matrix $Q_m$ cannot be chosen diagonal in general, and there will be nontrivial correlations between outcomes $\mathbf{j}$ and $\mathbf{k}$. 
Indeed, the measurement defined by $\Phi^m$ has only $n_m:=\dim(\GH_m)$ outcomes, and so there will be only $n_m$ post-measurement states in an optimal description of the measurement. 

From now on, $K_1,\dots,K_n$ will always be the $\rho_0$-orthogonal choice of minimal Kraus operators for $\Phi$ guaranteed by Lemma \ref{LemmaChoidiago}. 
\begin{Remark}\label{subofGNSrem} We can identify $K_\mathbf{k}^*K_\mathbf{j}$ with the matrix unit $|p_me_\mathbf{k}\ket\bra p_me_\mathbf{j}|$ in $\Bi(\GH_m)$. Then $\Tr(\rho_0\cdot)$ defines a state $\phi_m$ on $\Bi(\GH_m)$ by the formula
$$
\phi_m(A):=\frac{\Tr(Q_mA)}{\Tr(Q_m)},\qquad \forall A\in\Bi(\GH_m).
$$
In fact, by Remark \ref{GNSremark} we can identify $\GH_m$ with a subspace of $\Hi_{\rho_0}$.  If $\rho_0$ has $\Phi$-symmetric correlations then the subspaces $\GH_m$ of $\Hi_{\rho_0}$ satisfy $\GH_{m+l}\subseteq\GH_m\otimes\GH_l$ for all $m\leq l$ (see \cite[§4.7]{An6}). 
\end{Remark}

\subsection{KMS properties with respect to a channel}
Recall that a state $\omega$ on a $C^*$-algebra $\Ai$ is usually regarded as an ``equilibrium state" if it is \textbf{KMS state} \cite[§5.3]{BrRo2}, i.e. there exists a strongly continuous one-parameter group $\sigma^\omega_\bullet$ of $*$-automorphisms of $\Ai$ such that the \textbf{KMS condition} holds: for all $a,b$ in a dense $*$-subalgebra of $\Ai$ we have
$$
\omega(ab)=\omega(\sigma_i^\omega(b)a),
$$  
where $\sigma_i^\omega$ is a (non-$*$) homomorphism of $\Ai$ obtained by analytically continuing $\R\ni t\to\sigma_t^\omega$ to the imaginary unit $i\in\C$. We refer to $\sigma_\bullet^\omega$ as the \textbf{modular automorphism group} of $\omega$. If $\Ai=\Mn_l(\C)$ is a finite-dimensional matrix algebra, any density matrix $\rho$ of full rank (so, defining a faithful state) is a KMS state, with modular automorphisms
$$
\sigma^\rho_t(A):=\rho^{it}A\rho^{-it},\qquad\forall A\in\Mn_l(\C),
$$
and we have $\sigma_t^\rho(A)=e^{itH}Ae^{-itH}$ if we write $\rho=e^{H}/\Tr(e^{H})$ for some positive matrix $H$. This is seen from
$$
\Tr(\rho AB)=\Tr(\rho \rho^{-1}B\rho A).
$$
We will now see how the correlation matrix \eqref{corrmatrix} can be used to define thermodynamics of the state $\rho_0$ \emph{with respect to} the evolution $\Phi$, in case we assume the ``$Q$-sphere condition" (defined below) on the Kraus operators. 

\begin{dfn}\label{Krausalgnot}
The \textbf{Kraus algebra} is the $C^*$-algebra $\GT$ generated by the Kraus operators $K_1,\dots,K_n$. Let $\GT^{(0)}$ be the $C^*$-subalgebra of $\GT$ generated by elements of the form $K_\mathbf{j}^*K_\mathbf{k}$ and $K_\mathbf{j}K_\mathbf{k}^*$ with $|\mathbf{j}|=|\mathbf{k}|$.
\end{dfn}

\begin{prop}\label{propKMS}
Let $Q\in\Bi(\GH)$ be a positive invertible matrix such that $Q_m:=p_mQ^{\otimes m}p_m$ is equal to $Q^{\otimes m}p_m$ for all $m\in\N_0$, where $p_m:\GH^{\otimes m}\to\GH_m$ is the projection. Suppose that the Kraus operators satisfy the \textbf{$Q$-sphere condition}
\begin{equation}\label{longQsphere}
\sum_{|\mathbf{j}|=m=|\mathbf{k}|}Q^{\mathbf{k},\mathbf{j}}K_\mathbf{j}K_\mathbf{k}^*=\bone
\end{equation}
(for all $m\in\N$). Then there is a KMS state $\omega_Q$ on the Kraus algebra $\GT$ with modular automorphism group $\sigma_\bullet$ given by
\begin{equation}\label{modauto}
\sigma_t(K_\mathbf{j}):=\sum_{|\mathbf{r}|=m}Q^{-it}_{\mathbf{j},\mathbf{r}}K_\mathbf{r},\qquad\textnormal{ when }|\mathbf{j}|=m
\end{equation}
such that (for $|\mathbf{j}|=m$, $|\mathbf{k}|=l$)
$$
\omega_Q(K_\mathbf{j}^*K_\mathbf{k})=\delta_{m,l}\frac{Q_{\mathbf{k},\mathbf{j}}}{\Tr(Q_m)},\qquad \omega_Q(K_\mathbf{j}K_\mathbf{k}^*)=\delta_{m,l}\frac{\bra e_{\mathbf{j}}|p_me_\mathbf{k}\ket}{\Tr(Q_m)}.
$$
\end{prop}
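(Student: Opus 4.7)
The strategy is to build $\omega_Q$ level by level on $\GT^{(0)}$ via the matrix-unit identification of Remark \ref{subofGNSrem}, extend to all of $\GT$ through the natural circle gauge action, and finally verify the KMS condition by a direct computation on generators using the explicit formulas for $\omega_Q$ and $\sigma_t$.

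First I would use the identification $K_\mathbf{j}^*K_\mathbf{k}\leftrightarrow |p_me_\mathbf{k}\ket\bra p_me_\mathbf{j}|$ to recognize the span of $\{K_\mathbf{j}^*K_\mathbf{k}\}_{|\mathbf{j}|=|\mathbf{k}|=m}$ as a copy of $\Bi(\GH_m)$. Since $Q^{\otimes m}$ preserves $\GH_m$, the restriction $Q_m$ is positive invertible on $\GH_m$, so $\omega_m(A):=\Tr(Q_mA)/\Tr(Q_m)$ is a faithful state on $\Bi(\GH_m)$. Plugging in the matrix unit and using $Q^{\otimes m}p_m=Q_m$ reproduces the claimed value $\omega_Q(K_\mathbf{j}^*K_\mathbf{k})=Q_{\mathbf{k},\mathbf{j}}/\Tr(Q_m)$. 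The nontrivial point is compatibility of the $\omega_m$ with the natural inclusions $\Bi(\GH_m)\hookrightarrow\Bi(\GH_{m+1})$ arising from multiplying by $\bone=\sum K_k^*K_k$; I would check this by expanding a matrix unit at level $m$, inserting $\bone$ in its Kraus form, and contracting with the $Q$-sphere identity to recognize $\omega_{m+1}$ restricted to the image as $\omega_m$. The value $\omega_Q(K_\mathbf{j}K_\mathbf{k}^*)=\bra e_\mathbf{j}|p_me_\mathbf{k}\ket/\Tr(Q_m)$ is then forced by applying the (already defined) $\omega_Q$ to the $Q$-sphere relation.

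Next I would extend $\omega_Q$ from $\GT^{(0)}$ to all of $\GT$. The Kraus algebra carries a natural $\Z$-grading (circle action) in which $K_j$ has degree $+1$; $\GT^{(0)}$ is precisely the degree-zero subalgebra, and averaging over $\Un(1)$ yields a conditional expectation $E:\GT\to\GT^{(0)}$. Set $\omega_Q:=\omega_Q|_{\GT^{(0)}}\circ E$, so $\omega_Q$ vanishes on all nontrivially graded components, consistent with the factor $\delta_{m,l}$ in the statement. Positivity and $\omega_Q(\bone)=1$ are automatic because $E$ is unital completely positive. For the modular group, define $\sigma_t$ on generators by the stated formula and extend as a $*$-homomorphism. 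Well-definedness requires that $\sigma_t$ preserve every relation among $K_1,\dots,K_n$; but any such relation corresponds to a vector in the kernel of $\GH^{\otimes m}\twoheadrightarrow\GH_m$, and invariance of $\GH_m$ under $Q^{\otimes m}$ (equivalently of $\GH_m^\perp$) exactly ensures that relations are preserved. Unitality is preserved because $Q^{-it}$ is unitary in each length, and the $Q$-sphere condition is preserved by the same mechanism.

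Finally, the KMS condition reduces by multiplicativity of $\sigma_\bullet$ and the homomorphism property to checking $\omega_Q(ab)=\omega_Q(\sigma_i(b)a)$ for generator pairs $a=K_\mathbf{i}$, $b=K_\mathbf{j}^*$ and $a=K_\mathbf{i}^*$, $b=K_\mathbf{j}$ with matching lengths (other degree combinations annihilate both sides). Substituting $\sigma_i(K_\mathbf{j})=\sum_\mathbf{r} Q_{\mathbf{j},\mathbf{r}}K_\mathbf{r}$ and $\sigma_i(K_\mathbf{j}^*)=\sum_\mathbf{r} Q^{\mathbf{r},\mathbf{j}}K_\mathbf{r}^*$ collapses the equality to the two identities
$$
\sum_{\mathbf{r}} Q^{\mathbf{r},\mathbf{j}} Q_{\mathbf{i},\mathbf{r}} = (p_m)_{\mathbf{i},\mathbf{j}}, \qquad \sum_{\mathbf{r}} Q_{\mathbf{j},\mathbf{r}} (p_m)_{\mathbf{r},\mathbf{i}} = Q_{\mathbf{j},\mathbf{i}},
$$
which are immediate from $Q_mQ_m^{-1}=p_m$ and $Q^{\otimes m}p_m=Q_m$ respectively.

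The principal obstacle is the twin well-definedness check: that the level states $\omega_m$ patch consistently across the inclusions $\GH_{m+l}\subseteq\GH_m\otimes\GH_l$, and that $\sigma_t$ descends to a $*$-automorphism of $\GT$. Both reduce to the interplay between the subproduct structure and the ``diagonal" operator $Q^{\otimes\bullet}$, and the $Q$-sphere condition is exactly the algebraic relation that makes this interplay coherent.
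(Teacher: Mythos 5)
Your proposal follows essentially the same route as the paper's proof. The KMS verification reduces to the same two matrix identities $Q_mQ_m^{-1}=p_m$ and $Q_mp_m=Q_m$: the paper performs exactly the computation $\omega_Q(K_\mathbf{j}K_\mathbf{k}^*)=\omega_Q\big(K_\mathbf{k}^*\sigma_{-i}(K_\mathbf{j})\big)=\sum_{\mathbf{r}}Q^{\mathbf{j},\mathbf{r}}Q_{\mathbf{r},\mathbf{k}}/\Tr(Q_m)=(p_m)_{\mathbf{j},\mathbf{k}}/\Tr(Q_m)$, which is your first identity read in the equivalent form $\omega(ab)=\omega(b\sigma_{-i}(a))$ of the KMS condition. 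Likewise, the well-definedness of $\sigma_t$ is argued in both cases from the invariance of $\GH_m$ under $Q^{\otimes m}$, via $K_\mathbf{k}=\sum_{|\mathbf{r}|=m}(p_m)_{\mathbf{k},\mathbf{r}}K_\mathbf{r}$. Where you go beyond the paper is in actually constructing the state: the paper contents itself with the one-line assertion that \eqref{longQsphere} ensures the proposed formulas define a state on $\GT$, whereas you build $\omega_Q$ from the states $\Tr(Q_m\,\cdot\,)/\Tr(Q_m)$ on $\Bi(\GH_m)$ (as in Remark \ref{subofGNSrem}), check level compatibility, and push forward along a conditional expectation.

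The one step that does not follow from the stated hypotheses is the existence of the circle gauge action, and hence of the conditional expectation $E:\GT\to\GT^{(0)}$. Since the $K_j$ are concrete operators on $\Hi_0$, the assignment $K_j\mapsto zK_j$ extends to a $*$-automorphism of $\GT=C^*(K_1,\dots,K_n)$ only if every $*$-algebraic relation among the $K_j$ is homogeneous in the $\Z$-degree; this already fails when one Kraus operator is proportional to $\bone$, and the paper explicitly contemplates inhomogeneous relations (see the last lines of the proof of Corollary \ref{normordKMScor} and Remark \ref{normordKMSrem}). The same homogeneity is silently needed for your (and the paper's) claim that $\sigma_t$ descends to an automorphism of $\GT$: invariance of the kernel of $\GH^{\otimes m}\twoheadrightarrow\GH_m$ under $Q^{\otimes m}$ only disposes of \emph{linear} relations among words of a fixed length, not of arbitrary mixed-degree $C^*$-relations. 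So your added rigor exposes a hypothesis (degree-homogeneity of the relations, equivalently the existence of the gauge action) that the proposition tacitly assumes and that the paper papers over by fiat; modulo that shared caveat, your argument and the paper's coincide.
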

\begin{proof} 
The relation \eqref{longQsphere} ensures that the proposed formulas for $\omega_Q$ define a state on $\GT$ (cf. also the proof of Corollary \ref{normordKMScor} below). We need to check the formula \eqref{modauto} on products involving equally many $K_j$'s as $K_k^*$'s. Assuming \eqref{modauto} we get, for $|\mathbf{j}|=m=|\mathbf{k}|$,
\begin{align*}
\Tr(\rho_0K_\mathbf{j}K_\mathbf{k}^*)&=\Tr(\rho_0K_\mathbf{k}^*\sigma_{-i}(K_\mathbf{j}))=\sum_{|\mathbf{r}|=m}Q^{\mathbf{j},\mathbf{r}}\Tr(\rho_0K_\mathbf{k}^*K_\mathbf{r})
\\&=\frac{1}{\Tr(Q_m)}\sum_{|\mathbf{r}|=m}Q^{\mathbf{j},\mathbf{r}}Q_{\mathbf{r},\mathbf{k}}
\\&=\frac{(p_m)_{\mathbf{j},\mathbf{k}}}{\Tr(Q_m)}.
\end{align*}
We see that formula \eqref{modauto} defines a homomorphism if we use the assumption that $Q^{\otimes m}$ preserves the subspace $\GH_m$. Namely, the fact that $\GH_m$ is the span of the Kraus operators gives
$$
K_\mathbf{k}=\sum_{|\mathbf{r}|=m}(p_m)_{\mathbf{k},\mathbf{r}}K_\mathbf{r},\qquad\textnormal{ when }|\mathbf{k}|=m,
$$
from which one deduces that $\sigma_t$ is a homomorphism. By definition of $\GT$ it follows that $\omega_Q$ is KMS. 
\end{proof}
\begin{dfn}\label{normorddef}
We say that an element $K$ of $\GT$ is \textbf{normally ordered} if $K$ is expressed as a sum of elements of the form $K_\mathbf{j}^*K_\mathbf{k}$.
\end{dfn}
\begin{cor}\label{normordKMScor} Let $\rho_0$ be a density matrix on $\Hi_0$ with $\Phi$-symmetric correlations in the sense of Definition \ref{assumptonrhonoll}, and let $Q_m$ denote its correlation matrix \eqref{corrmatrix}. If every element of the Kraus algebra $\GT$ can be normally ordered then
$$
\Tr(\rho_0K)=\omega_Q(K),\qquad \forall K\in\GT.
$$
In particular, we have the anti-normally ordered correlations (for $|\mathbf{j}|=m=|\mathbf{k}|$)
\begin{equation}\label{antinormcorr}
\Tr(\rho_0K_\mathbf{j}K_\mathbf{k}^*)=\frac{\bra e_{\mathbf{j}}|p_me_\mathbf{k}\ket}{\Tr(Q_m)}.
\end{equation}
\end{cor}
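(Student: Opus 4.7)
The plan is to reduce the identity $\Tr(\rho_0 K)=\omega_Q(K)$ to its verification on normally ordered monomials $K_\mathbf{j}^*K_\mathbf{k}$, and then to extract the anti-normally ordered formula as the special case $K=K_\mathbf{j}K_\mathbf{k}^*$. By the normal-ordering hypothesis combined with linearity of both functionals $\Tr(\rho_0\,\cdot)$ and $\omega_Q$, it suffices to verify
$$\Tr(\rho_0 K_\mathbf{j}^*K_\mathbf{k})=\omega_Q(K_\mathbf{j}^*K_\mathbf{k})$$
for every pair of multi-indices $\mathbf{j},\mathbf{k}$, and then to read off \eqref{antinormcorr} by applying this equality to the anti-normally ordered monomial $K_\mathbf{j}K_\mathbf{k}^*$.

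When $|\mathbf{j}|=|\mathbf{k}|=m$, cyclicity of the trace rewrites $\Tr(\rho_0 K_\mathbf{j}^*K_\mathbf{k})=\Tr(K_\mathbf{k}\rho_0 K_\mathbf{j}^*)$, and the $\Phi$-symmetric correlation condition \eqref{corrmatrix} of Definition~\ref{assumptonrhonoll} (with the roles of $\mathbf{j}$ and $\mathbf{k}$ swapped) identifies this with $Q_{\mathbf{k},\mathbf{j}}/\Tr(Q_m)$. Proposition~\ref{propKMS} gives exactly the same number for $\omega_Q(K_\mathbf{j}^*K_\mathbf{k})$, so the two functionals agree on equal-length normal monomials. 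When $|\mathbf{j}|\neq|\mathbf{k}|$, Proposition~\ref{propKMS} yields $\omega_Q(K_\mathbf{j}^*K_\mathbf{k})=0$, so one must also establish $\Tr(\rho_0 K_\mathbf{j}^*K_\mathbf{k})=0$. Here the plan is to insert the iterated unitality identity $\sum_{|\mathbf{s}|=l-m}K_\mathbf{s}^*K_\mathbf{s}=\bone$ together with the $Q$-sphere relation \eqref{longQsphere} at length $l-m$ to rebalance the product so that each resulting summand falls under the equal-length case already handled; the diagonal form of $Q$ from Lemma~\ref{LemmaChoidiago} and the orthogonality relations $\Tr(\rho_0 K_j)=0$ for $j\ge 2$ noted in Remark~\ref{GNSremark} should then force the resulting sum to collapse to zero.

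Finally, the anti-normally ordered identity \eqref{antinormcorr} is immediate from the general equality applied to $K=K_\mathbf{j}K_\mathbf{k}^*$ with $|\mathbf{j}|=m=|\mathbf{k}|$: this $K$ lies in $\GT$ and by assumption admits a normal form, so
$$\Tr(\rho_0 K_\mathbf{j}K_\mathbf{k}^*)=\omega_Q(K_\mathbf{j}K_\mathbf{k}^*)=\frac{\bra e_\mathbf{j}|p_m e_\mathbf{k}\ket}{\Tr(Q_m)}$$
by the explicit formula from Proposition~\ref{propKMS}. The main obstacle is the vanishing argument in the unequal-length case: it is not a purely formal consequence of the hypotheses, since the $\Phi$-symmetric correlation condition constrains only equal-length products, and a careful interplay of the $Q$-sphere relation, the normal-ordering hypothesis, and the GNS description of Remark~\ref{subofGNSrem} is needed in order to convert mixed-length normal monomials into equal-length ones and to verify that the resulting sum truly vanishes.
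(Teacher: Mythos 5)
Your overall architecture is correct and is essentially the paper's own argument, just organized more directly: both $\Tr(\rho_0\,\cdot)$ and $\omega_Q$ are linear functionals on $\GT$; they agree on equal-length normal monomials because $\Tr(\rho_0K_\mathbf{j}^*K_\mathbf{k})=\Tr(K_\mathbf{k}\rho_0K_\mathbf{j}^*)=Q_{\mathbf{k},\mathbf{j}}/\Tr(Q_m)=\omega_Q(K_\mathbf{j}^*K_\mathbf{k})$ by \eqref{corrmatrix} and Proposition \ref{propKMS}; the normal-ordering hypothesis upgrades this to all of $\GT$; and \eqref{antinormcorr} then drops out of the stated value of $\omega_Q(K_\mathbf{j}K_\mathbf{k}^*)$. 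The paper phrases this as: apply $\Tr(\rho_0\,\cdot)$ to the $Q$-sphere condition \eqref{longQsphere}, check that \eqref{antinormcorr} solves the resulting constraint via the computation $\sum_{|\mathbf{j}|=m=|\mathbf{k}|}Q^{\mathbf{k},\mathbf{j}}\bra e_\mathbf{j}|p_me_\mathbf{k}\ket=\Tr(Q_m^{-1})=\Tr(Q_m)$, and then invoke normal ordering together with Proposition \ref{propKMS} to rule out the competing solutions (such as $Q_{\mathbf{j},\mathbf{k}}/\Tr(p_m)$). The content is the same; your direct appeal to the formula in Proposition \ref{propKMS} legitimately short-circuits the consistency check.

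The genuine gap is exactly where you flag it: the mixed-length monomials. Your proposed mechanism cannot work. Inserting $\sum_{|\mathbf{s}|=l-m}K_\mathbf{s}^*K_\mathbf{s}=\bone$ between $K_\mathbf{j}^*$ and $K_\mathbf{k}$ produces $\sum_{\mathbf{s}}(K_\mathbf{s}K_\mathbf{j})^*(K_\mathbf{s}K_\mathbf{k})$, so both multi-indices grow by $|\mathbf{s}|$ and the length difference $|\mathbf{k}|-|\mathbf{j}|$ is invariant under the operation; the same is true for insertions of the $Q$-sphere identity. No amount of such rebalancing converts a mixed-length normal monomial into equal-length ones. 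What the paper actually does is invoke the argument of Remark \ref{GNSremark}: for the $\rho_0$-orthogonal Kraus operators one has $\Tr(\rho_0K_\mathbf{j}^*K_\mathbf{k})=\bra\pi_{\rho_0}(K_\mathbf{j})\Omega|\pi_{\rho_0}(K_\mathbf{k})\Omega\ket_{\rho_0}$, and the same diagonalization that yields \eqref{zeroonKraus} forces this to vanish when $|\mathbf{j}|\ne|\mathbf{k}|$ --- with the explicit caveat that the vanishing is asserted only when $K_\mathbf{j}^*K_\mathbf{k}$ (resp. $K_\mathbf{j}K_\mathbf{k}^*$) cannot be rewritten as an equal-length product $K_\mathbf{r}^*K_\mathbf{s}$ with $|\mathbf{r}|=|\mathbf{s}|$; in the rewritable case both functionals are computed by the equal-length case already handled, and in the non-rewritable case $\omega_Q$ also vanishes because of the $\delta_{m,l}$ in Proposition \ref{propKMS}. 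Replace your rebalancing step by this orthogonality argument and the rest of your proposal stands.
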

\begin{proof} Applying the state $\Tr(\rho_0\cdot)$ on both sides of \eqref{longQsphere} gives
$$
\sum_{|\mathbf{j}|=m=|\mathbf{k}|}Q^{\mathbf{k},\mathbf{j}}\Tr(\rho_0K_\mathbf{j}K_\mathbf{k}^*)=1,
$$
which is solved by \eqref{antinormcorr}, where we note that
\begin{align*}
&\sum_{|\mathbf{j}|=m=|\mathbf{k}|}Q^{\mathbf{k},\mathbf{j}}\bra e_{\mathbf{j}}|p_me_\mathbf{k}\ket=\sum_{|\mathbf{k}|=m}(p_mQ_m^{-1})_{\mathbf{k},\mathbf{k}}
\\&=\Tr(p_mQ_m^{-1})=\Tr(Q_m^{-1})=\Tr(Q_m).
\end{align*}
However, there is the ambiguity that replacing $\bra e_{\mathbf{j}}|p_me_\mathbf{k}\ket$ by $\bra e_{\mathbf{k}}|p_me_\mathbf{j}\ket=\overline{\bra e_{\mathbf{j}}|p_me_\mathbf{k}\ket}$ would give the same result, since $\Tr(Q_m^t)=\Tr(Q_m)$, where $Q^t_m$ is the transpose of $Q_m$. Similarly, $\Tr(\rho_0K_\mathbf{j}K_\mathbf{k}^*)=Q_{\mathbf{j},\mathbf{k}}/\Tr(p_m)$ is consistent with the $Q$-sphere condition. On the other hand, since we assume that it is possible to use commutation relations to relate $K_\mathbf{j}K_\mathbf{k}^*$ and $K_\mathbf{k}^*K_\mathbf{j}$, only one of these choices can give a well-defined state on $\GT$. Therefore, Proposition \ref{propKMS} shows that equation \eqref{antinormcorr} is the correct thing. 

Finally, in the same way as \eqref{zeroonKraus} we deduce that
$$
\Tr(\rho_0K_\mathbf{j}K_\mathbf{k}^*)=0=\Tr(\rho_0K_\mathbf{k}^*K_\mathbf{j}),\qquad \text{when }|\mathbf{j}|\ne|\mathbf{k}|,
$$
unless $K_\mathbf{j}K_\mathbf{k}^*$ can be rewritten in the form $K_\mathbf{r}K_\mathbf{s}^*$ with $|\mathbf{r}|=|\mathbf{s}|$, and similarly for $K_\mathbf{k}^*K_\mathbf{j}$.
\end{proof}

\begin{Remark}\label{normordKMSrem}
If normal ordering is not possible in $\GT$ then, even if the $Q$-sphere condition holds, we cannot exclude the possibility that
$$
\Tr(\rho_0K_\mathbf{j}K_\mathbf{k}^*)=\frac{Q_{\mathbf{j},\mathbf{k}}}{\Tr(p_m)},
$$
and in this case $\Tr(\rho_0\cdot)$ does not restrict to a KMS state on $\GT$. The $Q$-sphere condition always guarantees that there is a KMS state on $\GT$ with the same normally ordered correlations as $\Tr(\rho_0\cdot)$, but we cannot always assure that this KMS state coincides with $\Tr(\rho_0\cdot)$.
\end{Remark}

\section{Detailed balance}

\subsection{Motivation}
Let us first recall an existing notion of detailed balance for quantum channels \cite{Crook1}, motivated as follows. Consider a classical Markov process for which $\{1,\dots,n\}$ is the set of possible states and $M\in\Mn_n(\C)$ is the transition matrix, meaning that $M_{j,k}$ is the probability of transition from state $k$ to state $j$. If the system is in an equilibrium distribution then a time reversal should have no effect on that distribution, and the probability of observing the transition $j\to k$ in the forward chain should be the same as the probability of observing the transition $k\to j$ in the time-reversed chain. Because the equilibrium probability distribution $\pi=(\pi_1,\dots,\pi_n)\in\R^n$ is the same for both chains, the transition matrix $\hat{M}(t)$ of the backward process must satisfy 
$$
M(t)_{j,k}\pi(t)_k=\hat{M}(t)_{k,j}\pi(t)_j,\qquad \forall j,k\in\{1,\dots,n\}.
$$
In matrix notation this says
\begin{equation}\label{ClassTR}
\hat{M}=\diag(\pi)^{-1}M^T\diag(\pi),
\end{equation}
where $\diag(\pi)$ is the diagonal matrix with entries $\pi_1,\dots,\pi_n$. The matrix $\hat{M}$ is the \textbf{time reversal} (or $\pi$-dual) of $M$. The Markov process is said to satisfy \textbf{detailed balance} if $\hat{M}=M$. That is, detailed balance holds iff
\begin{equation}\label{classdetbalance}
M(t)_{j,k}\pi(t)_k=M(t)_{k,j}\pi(t)_j
\end{equation}
for all $j,k\in\{1,\dots,n\}$. 

In view of the above, Crooks introduced the following notion of time reversal of a quantum operation $\Phi_*:\Bi(\Hi_0)_*\to\Bi(\Hi_0)_*$ \cite[§III]{Crook1}. Suppose that $\Phi_*$ has a unique fixed point, i.e. there is a unique density matrix $\rho_0$ such that $\Phi_*(\rho_0)=\rho_0$. Then $\rho_0$ plays the role of equilibrium distribution. We refer to $\Phi_*$ as the ``forward process" and we want to find a sensible time reversal of $\Phi_*$ which constitutes the ``backward process".

For each Kraus operator $K_j$ of $\Phi_*$ there should be a corresponding operator $\bar{K}_j$ of the reversed process such that, starting from the state $\rho_0$, the probability of observing any sequence of Kraus operators in the forward dynamics is the same as the probability of observing the reversed sequence of reversed operators in the reversed dynamics \cite[§III]{Crook1}. For instance, for all $j,k=1,\dots,n$ we should have
\begin{equation}\label{CrooksargTR}
\Tr(K_kK_j\rho_0 K_j^*K_k^*)=\Tr(\bar{K}_j\bar{K}_k\rho_0 \bar{K}_k^*\bar{K}_j^*).
\end{equation}
The invariant density matrix $\rho_0$ is assumed to be invertible. Then the solution to \eqref{CrooksargTR} is \cite[§III]{Crook1}
\begin{equation}\label{CrooksKraussol}
\bar{K}_j=\rho_0^{1/2}K_j^*\rho_0^{-1/2},\qquad \forall j=1,\dots,n.
\end{equation}
The \textbf{reversal} (or $\rho_0$-dual) of $\Phi_*$ is then defined as the channel $\bar{\Phi}_*$ with Kraus operators $\bar{K}_j$,
$$
\bar{\Phi}_*(\rho):=\sum^n_{j=1}\bar{K}_j\rho \bar{K}_j^*,\qquad \forall \rho\in\Bi(\Hi_0)_*.
$$

We would like to construct a unitary model for the time reversal $\bar{\Phi}$ in terms of the unitary model for $\Phi$ furnished by the Stinepring representation. The reason for this is that if the Stinespring representations can say something about the structure of $\Phi$, comparison with the Stinespring representation of $\bar{\Phi}$ would perhaps tell us something about the time-irreversibility of the dynamics.

The time reversal of the reduced dynamics obtained from a system-bath interaction was obtained approximately in \cite[§VII]{Crook1} for weak couplings and equilibrium bath states. However, the dynamics so obtained is not trace-preserving in general, because the conjugate transpose matrix $W^c:=(W^t)^*$ is not always unitary. Thus, even if we restrict ourselves to weak couplings and accept an approximation, we do not get a quantum channel for the backward process if its is obtained by tracing out the same bath. In order to have a unitary model for the channel with Kraus operators \eqref{CrooksKraussol} which relates to the unitary model of $\Phi$, something extra is needed.

\subsection{Putting it all together}
To summarize, the time-reversal of $(\Phi^m)_{m\in\N_0}$ should have the opposite ordering of the Kraus operators. Since $\Phi$ is a sum of terms $K_j^*\cdot K_j$, we therefore have to swap $K_j$ with its adjoint $K_j^*$. Simply switching $W$ and $W^*$ will give something entirely different, as one easily checks. Instead, we are led to the transposed adjoint $W^c:=(W^*)^t$ in searching for a dilation model for $\bar{\Phi}$. However, since the elements $W_{j,k}\in\Bi(\Hi_0)$ of the $n\times n$ matrix $W$ do not commute in general, it rarely happens that $W^c$ is unitary. 

Suppose generally that $F$ is an invertible $n\times n$ matrix such that 
$$
\bar{W}:=(\bone\otimes F)W^c(\bone\otimes F^{-1})
$$
is unitary on $\Hi_0\otimes\overline{\GH}$, where $\overline{\GH}$ is the conjugate Hilbert space of $\GH$. We then define a \textbf{time-reversal} of $\Phi$ as
\begin{align}\label{TRgenF}
\bar{\Phi}^F(A)&:=(\id\otimes\Tr)\big((\bone\otimes|\bar{e}_1\ket\bra \bar{e}_1|)\bar{W}^{-1}(A\otimes\bone)\bar{W}\big),\qquad\forall A\in\Bi(\Hi_0).
\end{align}
The operation $\bar{\Phi}$ is a channel because $\bar{W}$ is assumed to be unitary. The Kraus operators $\bar{K}_j$ of $\bar{\Phi}^F$ are given by
\begin{equation}\label{KrausTR}
\bar{K}_k=(FW^cF^{-1})_{1,k}=\sum^n_{r,s=1}F_{1,r}W_{r,s}^*F_{s,k}^{-1},\qquad \forall k=1,\dots,n.
\end{equation}
\begin{dfn}\label{TRinvdef}
The quantum channel $\Phi$ is said to be \textbf{time-reversal invariant} if there is an invertible $F\in\Bi(\GH)$ for which 
$$
\bar{\Phi}^F=\Phi.
$$
\end{dfn}
The motivation for Definition \ref{TRinvdef} will be given in §\ref{interpretsec}. In this section we will focus on the case when $F=Q^{1/2}$ for a correlation matrix $Q$. To arrive at our notion of detailed balance we need a lemma. 

\begin{Lemma}\label{LemmageneralTR}
 Let $\GT$ be the $C^*$-algebra generated by the Kraus operators $K_1,\dots,K_n$ and suppose that $\omega$ is a KMS state on $\GT$. Define elements $\bar{K}_1,\dots,\bar{K}_n$ of $\GT$ by requiring that
$$
\omega(\bar{K}_{\bar{\mathbf{k}}}^*\bar{K}_{\bar{\mathbf{k}}})=\omega(K_\mathbf{k}^*K_\mathbf{k})
$$
for all multi-indices $\mathbf{k}=k_1\cdots k_m$, where $\bar{\mathbf{k}}:=k_m\cdots k_1$. Then 
\begin{equation}\label{TRKrausKMS}
\bar{K}_j=\sigma_{-i/2}(K_j^*),\qquad\forall j=1,\dots,n,
\end{equation}
where $\sigma_\bullet$ is the modular automorphism group of $\omega$. 
\end{Lemma}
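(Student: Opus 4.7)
The plan is to verify directly that the formula $\bar K_j := \sigma_{-i/2}(K_j^*)$ satisfies the defining identity $\omega(\bar K_{\bar{\mathbf{k}}}^* \bar K_{\bar{\mathbf{k}}}) = \omega(K_\mathbf{k}^* K_\mathbf{k})$ for every word $\mathbf{k}=k_1\cdots k_m$. The key ingredients are three standard properties of the modular automorphism group on $\omega$-analytic elements of $\GT$: (a) each $\sigma_z$ is an algebra homomorphism, (b) adjoints conjugate the time parameter, $\sigma_z(a)^* = \sigma_{\bar z}(a^*)$, and (c) $\omega$ is modular-invariant, $\omega\circ\sigma_z=\omega$. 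In the concrete setting of the paper, where on a finitely generated Kraus algebra $\sigma_t$ is implemented by $Q^{it}\cdot Q^{-it}$ for the positive invertible $Q$ of Proposition \ref{propKMS}, every element of $\GT$ is analytic and these properties extend without fuss to all $z\in\C$.

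First I would rewrite the long product $\bar K_{\bar{\mathbf{k}}}$ compactly. Using (a) and $\bar{\mathbf{k}} = k_m\cdots k_1$,
$$
\bar K_{\bar{\mathbf{k}}} = \sigma_{-i/2}(K_{k_m}^*)\cdots \sigma_{-i/2}(K_{k_1}^*) = \sigma_{-i/2}(K_{k_m}^*\cdots K_{k_1}^*) = \sigma_{-i/2}(K_\mathbf{k}^*),
$$
and applying (b) with $z=-i/2$ gives $\bar K_{\bar{\mathbf{k}}}^* = \sigma_{i/2}(K_\mathbf{k})$. Inserting this into the left-hand side and using (c) (i.e. applying the automorphism $\sigma_{-i/2}$ inside the state),
$$
\omega\bigl(\bar K_{\bar{\mathbf{k}}}^*\bar K_{\bar{\mathbf{k}}}\bigr) = \omega\bigl(\sigma_{i/2}(K_\mathbf{k})\,\sigma_{-i/2}(K_\mathbf{k}^*)\bigr) = \omega\bigl(K_\mathbf{k}\,\sigma_{-i}(K_\mathbf{k}^*)\bigr).
$$

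Finally I apply the KMS condition. From the form stated in the paper, $\omega(ab) = \omega(\sigma_i(b)a)$, a direct substitution $b\mapsto \sigma_{-i}(b)$ yields the equivalent rewriting $\omega(a\,\sigma_{-i}(b)) = \omega(ba)$; taking $a=K_\mathbf{k}$ and $b=K_\mathbf{k}^*$ gives $\omega(K_\mathbf{k}\,\sigma_{-i}(K_\mathbf{k}^*)) = \omega(K_\mathbf{k}^* K_\mathbf{k})$. Concatenating the three equalities produces the required identity, so $\bar K_j = \sigma_{-i/2}(K_j^*)$ is indeed a solution.

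The main obstacle is not really substantial: the computation is a bookkeeping exercise once the three modular facts are arranged in the right order. The only genuine point of care is the analytic continuation of $\sigma_t$ to $t\in\{\pm i/2,-i\}$, which is automatic in the finitely generated/bounded situation considered here. For the converse direction (that the identity determines $\bar K_j$ uniquely, not merely up to ambiguity), I would remark that the family of conditions indexed by all $\mathbf{k}$ pins down the vectors $\bar K_{\bar{\mathbf{k}}}\Omega_\omega$ in the GNS Hilbert space of $\omega$ up to orthogonal corrections that must vanish by the linear independence of the $K_\mathbf{k}^*\Omega_\omega$, and $\sigma_{-i/2}(K_j^*)$ is the unique element of $\GT$ realising this list.
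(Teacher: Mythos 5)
Your proposal is correct and follows essentially the same route as the paper: both verify that $\bar{K}_j=\sigma_{-i/2}(K_j^*)$ satisfies the defining correlation identity by computing $\bar{K}_{\bar{\mathbf{k}}}^*=\sigma_{i/2}(K_\mathbf{k})$ via multiplicativity and the adjoint relation $\sigma_{z}(a)^*=\sigma_{\bar z}(a^*)$, and then invoking the KMS condition to reduce $\omega\bigl(\sigma_{i/2}(K_\mathbf{k})\sigma_{-i/2}(K_\mathbf{k}^*)\bigr)$ to $\omega(K_\mathbf{k}^*K_\mathbf{k})$. Your intermediate use of modular invariance and the explicit uniqueness remark only add detail the paper leaves implicit.
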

\begin{proof} Note first that $\sigma_{i\lambda}(T)^*=\sigma_{-i\lambda}(T^*)$ for all $T\in\GT$ and all $\lambda\in\R$, so if we take $\bar{K}_j$ to be given by \eqref{TRKrausKMS} then its adjoint is
$$
\bar{K}_j^*=\sigma_{i/2}(K_j).
$$
More generally,
\begin{align*}
\bar{K}_\mathbf{j}^*&=\big(\sigma_{-i/2}(K_{j_1}^*)\cdots\sigma_{-i/2}(K_{j_m}^*)\big)^*
=\sigma_{i/2}(K_{j_m})\cdots\sigma_{i/2}(K_{j_1})=\sigma_{i/2}(K_{\bar{\mathbf{j}}}),
\end{align*}
where we note the interchange $\mathbf{j}\leftrightarrow\bar{\mathbf{j}}$. Now the KMS condition says
\begin{align*}
\omega\big(\sigma_{i/2}(K_\mathbf{k})\sigma_{-i/2}(K_\mathbf{k}^*)\big)&=\omega(K_\mathbf{k}^*K_\mathbf{k}),
\end{align*}
so the result is clear. 
\end{proof}

Let now $Q$ be the correlation matrix of a density matrix $\rho_0$ on $\Hi_0$ such that $K_k\rho_0K_k^*\ne 0$ for all $k=1,\dots,n$ (recall Definition \ref{corrmatrixdef}). The discussion about time-reversal symmetry would be greatly simplified if it happens that the operator 
\begin{equation}\label{conjugateunitprocess}
\bar{W}:=(\bone\otimes Q^{1/2})W^c(\bone\otimes Q^{-1/2})
\end{equation}
is unitary, where we have chosen the positive square root $Q^{1/2}$ of $Q$. Namely, we have seen that we can take $Q$ to be diagonal and if we set $Q_{1,1}=1$ then the Kraus operators $\bar{K}_k$ of $\bar{\Phi}$ are simply given by
\begin{equation}\label{KrausTR}
\bar{K}_k=(Q^{1/2}W^cQ^{-1/2})_{1,k}=Q_{k,k}^{-1/2}K_k^*,\qquad \forall k=1,\dots,n.
\end{equation}
Now the point is that \eqref{conjugateunitprocess} is unitary whenever $\Phi$ is symmetric under $\G\subseteq A_u(Q)$ (cf. \eqref{sqrootconjugate} below), or more generally whenever the $Q$-sphere condition \eqref{choiceminimalrels} holds. In fact, \eqref{conjugateunitprocess} defines a unitary operator \emph{precisely} when the $Q$-sphere condition holds. 

Moreover, we obtain Crooks' condition \eqref{CrooksargTR} in this formalism.

\begin{thm}\label{ThmCrookscond}
Suppose that $\rho_0$ has $\Phi$-symmetric correlations, and that the Kraus operators of $\Phi^m$ satisfy the $Q_m$-sphere condition \eqref{longQsphere} and allow for normal ordering. Define $\bar{K}_k$ by \eqref{KrausTR}. Then for all multi-indices $\mathbf{k}$ of length $|\mathbf{k}|=m$, 
$$
\Tr(\rho_0\bar{K}_{\bar{\mathbf{k}}}^*\bar{K}_{\bar{\mathbf{k}}})=\Tr(\rho_0K_\mathbf{k}^*K_\mathbf{k}),
$$
where $\bar{\mathbf{k}}:=k_m\cdots k_1$ for $\mathbf{k}=k_1\cdots k_m$. 
\end{thm}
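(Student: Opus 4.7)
The plan is to combine Corollary \ref{normordKMScor}, which identifies $\Tr(\rho_0\cdot)$ with the KMS state $\omega_Q$ on the whole Kraus algebra $\GT$, with Lemma \ref{LemmageneralTR}, which characterises the modular time-reversal of the $K_j$'s inside $\GT$. Under the normal-ordering hypothesis and the $Q_m$-sphere condition assumed in the theorem, Corollary \ref{normordKMScor} gives $\Tr(\rho_0 K)=\omega_Q(K)$ for every $K\in\GT$, so the claim reduces to
$$
\omega_Q\big(\bar K_{\bar{\mathbf k}}^*\bar K_{\bar{\mathbf k}}\big)=\omega_Q\big(K_{\mathbf k}^*K_{\mathbf k}\big).
$$
This is precisely Lemma \ref{LemmageneralTR} applied to $\omega=\omega_Q$, once the $\bar K_k$'s of \eqref{KrausTR} are identified with the modular reversals $\sigma_{-i/2}(K_k^*)$, where $\sigma_\bullet$ is the modular automorphism group \eqref{modauto}.

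The identification is where the real work lies. By Lemma \ref{LemmaChoidiago} I may pass to a Kraus representation in which $Q$ is diagonal; then \eqref{modauto} at $m=1$ simplifies to the scalar action $\sigma_t(K_j)=Q_{j,j}^{-it}K_j$. Combining this with $\sigma_{i\lambda}(T)^*=\sigma_{-i\lambda}(T^*)$ (as in the proof of Lemma \ref{LemmageneralTR}) and analytically continuing to $t=-i/2$, one gets that $\sigma_{-i/2}(K_j^*)$ is an explicit power of $Q_{j,j}$ times $K_j^*$. This power must agree with the $Q_{k,k}$-power extracted from $(FW^cF^{-1})_{1,k}$ in \eqref{KrausTR} for $F=Q^{1/2}$; once matched, the claim follows directly from Lemma \ref{LemmageneralTR}.

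As a sanity check, and probably as the form in which I would ultimately write the proof, I would also verify the identity by direct computation. Writing $\bar K_k=c_k K_k^*$ for the explicit scalar coming from \eqref{KrausTR}, the product $\bar K_{\bar{\mathbf k}}^*\bar K_{\bar{\mathbf k}}$ collapses to $\big(\prod_i c_{k_i}\big)^2\,K_{\mathbf k}K_{\mathbf k}^*$. Taking $\Tr(\rho_0\,\cdot)$ and invoking the anti-normally ordered correlation formula \eqref{antinormcorr} converts the left-hand side into $\big(\prod_i c_{k_i}\big)^2\,(p_m)_{\mathbf k,\mathbf k}/\Tr(Q_m)$, while the right-hand side is $Q_{\mathbf k,\mathbf k}/\Tr(Q_m)=\big(\prod_i Q_{k_i,k_i}\big)(p_m)_{\mathbf k,\mathbf k}/\Tr(Q_m)$ by the diagonality of $Q^{\otimes m}$. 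Equality therefore pins down $c_k^2=Q_{k,k}$, and the main obstacle reduces to checking that this is precisely the scalar produced both by the modular-group computation of the previous paragraph and by the Stinespring conjugation in \eqref{KrausTR}. This scalar bookkeeping is the only genuinely new ingredient; everything else is just unwinding Lemma \ref{LemmageneralTR} and Corollary \ref{normordKMScor}.
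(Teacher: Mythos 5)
Your overall route is the same as the paper's: the published proof consists of essentially your first two paragraphs --- it asserts $\bar K_k=\sigma_{-i/2}(K_k^*)$ with $\sigma_\bullet$ as in \eqref{modauto} and then ``reverses'' the proof of Lemma \ref{LemmageneralTR}, with Corollary \ref{normordKMScor} supplying the identification of $\Tr(\rho_0\,\cdot)$ with $\omega_Q$ on $\GT$. So there is no divergence of method. The problem is that you correctly isolate the identification $\bar K_k=\sigma_{-i/2}(K_k^*)$ as ``the only genuinely new ingredient'' and then do not carry it out: the proposal ends by declaring the scalar matching to be the remaining obstacle rather than by verifying it. That is a genuine gap, and not a harmless one, because the verification does not go through as smoothly as you expect.

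Concretely: with $Q$ diagonal, \eqref{modauto} gives $\sigma_t(K_j)=Q_{j,j}^{-it}K_j$, hence $\sigma_{i/2}(K_j)=Q_{j,j}^{1/2}K_j$ and, by $\sigma_{i\lambda}(T)^*=\sigma_{-i\lambda}(T^*)$, $\sigma_{-i/2}(K_j^*)=Q_{j,j}^{+1/2}K_j^*$. Your own sanity check points the same way: writing $\bar K_k=c_kK_k^*$, equation \eqref{antinormcorr} together with $\Tr(\rho_0K_{\mathbf k}^*K_{\mathbf k})=Q_{\mathbf k,\mathbf k}/\Tr(Q_m)=(Q^{\otimes m})_{\mathbf k,\mathbf k}(p_m)_{\mathbf k,\mathbf k}/\Tr(Q_m)$ forces $c_k^2=Q_{k,k}$, i.e.\ $c_k=Q_{k,k}^{+1/2}$. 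But \eqref{KrausTR} as printed gives $\bar K_k=(Q^{1/2}W^cQ^{-1/2})_{1,k}=Q_{k,k}^{-1/2}K_k^*$, i.e.\ $c_k=Q_{k,k}^{-1/2}$. The two exponents agree only when $Q_{k,k}=1$. So the ``scalar bookkeeping'' you defer is exactly where the argument currently fails to close: either one proves the theorem for $\bar K_k:=\sigma_{-i/2}(K_k^*)=Q_{k,k}^{1/2}K_k^*$, in which case Lemma \ref{LemmageneralTR} applies verbatim, or one keeps \eqref{KrausTR} literally and must first reconcile the sign of the exponent (e.g.\ by replacing $Q$ by $Q^{-1}$ in \eqref{modauto}, or by re-examining which of $Q$, $Q^{-1}$ the conventions of Definition \ref{corrmatrixdef} and Proposition \ref{propKMS} actually produce). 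A complete proof has to make this choice explicitly and check it against \eqref{antinormcorr}; as written, your proposal stops one computation short of discovering that the two candidate scalars disagree.
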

\begin{proof} Observe that, with $\sigma_\bullet$ defined by \eqref{modauto}, we have
$$
\bar{K}_k=\sigma_{-i/2}(K_k^*),\qquad\forall k=1,\dots,n,
$$
and hence also $\bar{K}_{\bar{\mathbf{k}}}=\sigma_{-i/2}(K_\mathbf{k}^*)$ for all $\mathbf{k}$, so the result follows by reversing the proof of Lemma \ref{LemmageneralTR}. 
\end{proof}
Note that there is no need for $\rho_0$ to have dense range here. 

We can summarize our discussion in a suitable definition of detailed balance.
\begin{dfn}\label{detbaldef}
Let $Q\in\Bi(\GH)$ be a positive invertible operator such that $Q_m=p_mQ^{\otimes m}p_m=Q^{\otimes m}p_m$ for all $m\in\N_0$. Let $\rho_0$ be a density matrix on $\Hi_0$ such that $\Tr(\rho_0\cdot):\Bi(\Hi_0)\to\C$ extends the KMS state $\omega_Q:\GT\to\C$ defined in Proposition \ref{propKMS}. The dynamics $(\Phi^m)_{m\in\N_0}$ satisfies \textbf{detailed balance} with respect to $\rho_0$ if the $\rho_0$-orthogonal Kraus operators of $\Phi$ (see Lemma \ref{LemmaChoidiago}) satisfy the $Q$-sphere condition \eqref{longQsphere}.
\end{dfn}
In particular, suppose that normal ordering is possible in the Kraus algebra $\GT$ and that $\rho_0$ is a density matrix on $\Hi_0$ with $\Phi$-symmetric correlations equal to $(Q_m)_{m\in\N_0}$. Then Corollary \ref{normordKMScor} shows that detailed balance holds with respect to $\rho_0$ if the $Q$-sphere condition is satisfied by the $\rho_0$-orthogonal Kraus operators of $\Phi$.

Definition \ref{detbaldef} is more complicated because we want to allow for detailed balance also when there is no normal ordering. Nonetheless, the absence of normal ordering corresponds to a rather extreme form of complicated dynamics; see §\ref{interpretsec}.

\begin{Remark}[Fluctuation relations]
If the time evolution satisfies detailed balance in the sense of Definition \ref{detbaldef} then the path probabilities in any two-point measurement satisfy a kind of ``microscopic reversibility" (cf. \cite{Monn1}), from which nonequilibrium fluctuation relations follow immediately (for any number $m\in\N$ of time steps). We shall not discuss fluctuation relations here but we realize that there is an overlap between our discussion and a very recent paper \cite{MHP1}. The time-reversal (anti-) unitary appearing there (which may be referred to as the ``kinematical time reversal") could easily be included also here. Then \cite{MHP1} gives examples of our formalism with $\Hi_0$ finite-dimensional and $\rho_0$ of full rank.  
\end{Remark}

\section{Interpretation and quantum groups}

\subsection{Compact matrix quantum groups}
We give a brief but self-contained introduction to compact matrix quantum groups \cite{KlS}, \cite{Timm1}, \cite{Wor1} which will help in identifying symmetries of quantum systems. 

The classical compact group $G:=\SU(2)$ can be defined in terms of the $C^*$-algebra $C(G)$ of continuous function on $G$ with values in $\C$. Namely, $C(G)$ is generated by the coordinate functions $u_{j,k}:G\to\C$ which sends a matrix $g=\big(\begin{smallmatrix}g_{1,1}&g_{1,2}\\g_{2,1}&g_{2,2}\end{smallmatrix}\big)$ to its coefficients $g_{j,k}$,
$$
u_{j,k}(g):=g_{j,k},\qquad j,k=1,2.
$$
Thus, $C(G)$ is characterized as the $C^*$-algebra generated by the entries $u_{j,k}$ of a $2\times 2$ unitary matrix 
$$
u=(u_{j,k})_{j,k=1,2}\in\Mn_2(\C)\otimes C(G)
$$
such that the entries $u_{j,k}$ mutually commute, or equivalently such that 
\begin{equation}\label{BuQunitdef}
u=(F\otimes\bone)u^c(F^{-1}\otimes\bone)
\end{equation}
where $F$ is the invertible matrix $F:=\big(\begin{smallmatrix}0&1\\-1&0\end{smallmatrix}\big)$ and $u^c:=(u^t)^*$ is the matrix with entries $u_{j,k}^*$ (i.e. $u^c$ is the adjoint of the transpose of $u$). The group multiplication in $G$ is encoded in the ``comultiplication" $\Delta:C(G)\to C(G)\otimes C(G)$ given by
\begin{equation}\label{comult}
\Delta(u_{j,k}):=\sum_ru_{j,r}\otimes u_{r,k}
\end{equation}
(this follows from how matrix multiplication is defined), the inversion $G\ni g\to g^{-1}$ is equivalently described by the ``antipode" $\kappa$ defined as
\begin{equation}\label{antipode}
\kappa(u_{j,k}):=u_{k,j}^*
\end{equation}
and the unit $1\in G$ is captured by the ``counit" $\ep(u_{j,k}):=\delta_{j,k}$ (where $\delta_{j,k}$ is the Kronecker delta).

These observation lay the foundations of the theory of compact quantum groups: one looks for $C^*$-algebras possessing maps with the same properties as the maps $\Delta$, $\ep$, $\kappa$. There is no underlying group when the $C^*$-algebra is noncommutative, but the analogue is so good that one speaks of ``quantum groups". Admittedly this terminology is misleading but facilitates the discussion about ``representations" of the quantum group by making analogues with compact groups. We shall here only be interested in compact ``matrix" quantum groups where one of the representations is given a distinguished role.

\begin{dfn}[{\cite{Wang3}, \cite{Wor3}}] 
A \textbf{compact matrix quantum group} $\G$ is defined by a unital $C^*$-algebra $C(\G)$ generated by the entries $u_{j,k}$ of a single unitary matrix $u\in\Mn_n(\C)\otimes C(\G)$ (for some $n\in\N$) such that
\begin{enumerate}[(i)]
\item{the map $\Delta:C(\G)\to C(\G)\otimes C(\G)$ defined by the formula \eqref{comult} satisfies $\Delta(fg)=\Delta(f)\Delta(g)$ and $\Delta(f^*)=\Delta(f)^*$ for all $f,g\in C(\G)$, and}
\item{the matrix transpose $u^t$ of $u$ is invertible.}
\end{enumerate}
\end{dfn} 
We refer to the generating matrix $u$ as the \textbf{defining representation} of the ``group" $\G$. 

\begin{Remark}\label{noaxiomcoass}
If $\Delta$, when defined as in \eqref{comult}, is a homomorphism then $\Delta$ automatically satisfies ``coassociativity" and there is no need to postulate this property in the definition of a compact matrix quantum group. In fact, \eqref{comult} is the unique homomorphism on $C(\G)$ satisfying coassociativity since $C(\G)$ is generated by the $u_{j,k}$'s. The desired properties of the map $\kappa(u_{j,k}):=u_{k,j}^*$ follows from the invertibility of $u$ and its transpose $u^t$ \cite{Wor3}.
\end{Remark}

In the following, for a matrix $u$ with entries in $C(\G)$, we again write $u^c$ for the transpose of $u^*$, i.e. $(u^c)_{j,k}:=u_{j,k}^*$ where $u_{j,k}^*$ is the adjoint of $u_{j,k}$ in $C(\G)$.
\begin{dfn}[{\cite{Wang3}, \cite[Déf. 1]{Ban4}}]\label{UCMQGdef}
Let $F\in\GeL(n,C)$ be an invertible matrix and write $Q:=F^*F$. The \textbf{universal unitary quantum group} is the compact matrix quantum group $\G=A_u(Q)$ whose algebra of continuous functions $C(\G)$ is generated by the entries of a unitary $n\times n$ matrix $u$ satisfying the relations making $(F\otimes\bone)u^c(F^{-1}\otimes\bone)$ a unitary matrix. 

The \textbf{universal orthogonal quantum group} $\G=B_u(F)$ is the compact matrix quantum group whose algebra $C(\G)$ is the quotient of that of $A_u(Q)$ by the relation \eqref{BuQunitdef}. 
\end{dfn}
We have thus seen an example of a $B_u(F)$ already: the algebra $C(G)$ for the classical group $G:=\SU(2)$. More generally, the quantum special unitary group $\SU_q(2)$ is of the form $B_u(F)$, for $q\in[-1,1]$. 

We haven't defined the notion of ``representation" of a quantum group but it suffices to say that it is given by an element $v\in\Bi(\GH_v)\otimes C(\G)$ just as for the defining representation $u\in\Mn_n(\C)\otimes C(\G)$ but where $\GH_v$ is a Hilbert space more general than $\C^n$ and $v$ is allowed to be merely invertible (not necessarily unitary). Representations will here be understood to be finite-dimensional, i.e. each $\GH$ is finite-dimensional.
\begin{Remark}[Conjugates]\label{conjugateremark}
Recall that the ``conjugate" of a representation $v$ of a classical compact group on a Hilbert space $\GH_v$ is the representation $v^c$ on the conjugate Hilbert space $\bar{\GH}_v=\GH^*_v$. The crux with the quantum case is that $v^c$ is not a unitary matrix in general; this is so because the entries of $v$ need not commute. Nevertheless, for any representation $v$ of a quantum group $\G$, one can find an invertible matrix $F$ such that 
\begin{equation}\label{conjugatedef}
\bar{v}:=(F\otimes\bone)v^c(F^{-1}\otimes\bone)
\end{equation}
is unitary, and we can take this as a definition of a \textbf{conjugate} of $v$. The definition of $A_u(Q)$ ensures that this minimal requirement is satisfied for the defining representation, and this is what makes $A_u(Q)$ universal (see below). The relation \eqref{BuQunitdef} says precisely that $\bar{u}=u$, i.e. that $u$ is \textbf{self-conjugate} for every $B_u(F)$. 
\end{Remark}
Suppose that $\Hb$ and $\G$ are compact matrix quantum groups such that $C(\Hb)$ is a quotient of $C(\G)$. If the quotient map $\pi:C(\G)\to C(\Hb)$ fulfills $(\pi\otimes\pi)\circ\Delta_\G=\Delta_{\Hb}\circ\pi$, i.e. if $\pi$ intertwines the comultiplication of $\G$ with that of $\Hb$, then $\Hb$ is a \textbf{quantum subgroup} $\G$. As an example, $B_u(F)$ is a quantum subgroup of $A_u(Q)$ when $F^*F=Q$. 

The name ``universal" is motivated by the following fact, which we should anticipate from \eqref{conjugatedef}. 
\begin{Lemma}[{\cite{VaDW}}] Any compact matrix quantum group $\G$ is a quantum subgroup of $A_u(Q)$ for some $Q$. If $\G$ in addition has a self-conjugate defining representation, then $C(\G)$ is a quantum subgroup of some $B_u(F)$. We write $\G\subset A_u(Q)$ and $\G\subset B_u(F)\subset A_u(Q)$ for these cases respectively.
\end{Lemma}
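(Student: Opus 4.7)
The plan is to realize $C(\G)$ as a quotient of $C(A_u(Q))$ by invoking the universal property of $A_u(Q)$. The input is the defining representation $u \in \Mn_n(\C) \otimes C(\G)$, which is unitary with $u^t$ invertible, together with Remark \ref{conjugateremark}, which provides an invertible matrix $F \in \GeL(n,\C)$ such that $\bar{u} := (F\otimes\bone)u^c(F^{-1}\otimes\bone)$ is unitary in $\Mn_n(\C) \otimes C(\G)$. Setting $Q := F^*F$, the two matrices $u$ and $\bar{u}$ then satisfy in $C(\G)$ exactly the defining relations of the generators of $C(A_u(Q))$.

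Next, I would invoke the universal property of $A_u(Q)$: since $C(A_u(Q))$ is the universal unital $C^*$-algebra generated by $n^2$ elements $v_{j,k}$ subject to the relations that $v$ and $(F\otimes\bone)v^c(F^{-1}\otimes\bone)$ are unitary, and since the $u_{j,k}\in C(\G)$ satisfy these same relations, there is a unique unital $*$-homomorphism $\pi:C(A_u(Q)) \to C(\G)$ with $\pi(v_{j,k}) = u_{j,k}$. Because $C(\G)$ is by definition generated by the $u_{j,k}$'s, $\pi$ is surjective. To show $\pi$ is a morphism of compact matrix quantum groups, one checks on the generators that
\[
(\pi\otimes\pi)\Delta_{A_u(Q)}(v_{j,k}) = \sum_r u_{j,r}\otimes u_{r,k} = \Delta_\G(u_{j,k}) = \Delta_\G(\pi(v_{j,k})),
\]
and extends by multiplicativity and the $*$-property; this is where the observation of Remark \ref{noaxiomcoass} (coassociativity is automatic once $\Delta$ is defined by the formula on generators) does the work. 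Hence $\G$ is a quantum subgroup of $A_u(Q)$.

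For the second assertion, assume $u$ is self-conjugate, i.e.\ there exists an invertible $F$ with $u = (F\otimes\bone)u^c(F^{-1}\otimes\bone)$ in $\Mn_n(\C) \otimes C(\G)$. Then the $u_{j,k}$'s satisfy the additional relation \eqref{BuQunitdef} which defines $B_u(F)$ as a quotient of $A_u(Q)$ with $Q=F^*F$. The universal property of $B_u(F)$ now yields a surjective unital $*$-homomorphism $C(B_u(F)) \to C(\G)$, and the intertwining of comultiplications is verified on generators exactly as before. Composing with the canonical surjection $C(A_u(Q)) \to C(B_u(F))$ gives the chain $\G \subset B_u(F) \subset A_u(Q)$.

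The main point where care is required is the existence of the matrix $F$ in the first part. One must ensure that the Woronowicz axioms (unitarity of $u$ and invertibility of $u^t$) really do imply that $u^c$ becomes unitary after an invertible change of basis; this is the content of the general unitarizability theorem for finite-dimensional representations of compact matrix quantum groups, and is implicit in Remark \ref{conjugateremark}. Once this is in hand, the remaining steps are routine applications of the universal property and a check on generators that the comultiplications intertwine.
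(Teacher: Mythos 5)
The paper offers no proof of this lemma---it is quoted directly from \cite{VaDW}---and your argument is the standard one and is correct: unitarize the invertible corepresentation $u^c$ to obtain $F$, set $Q=F^*F$, and invoke the universal property of $A_u(Q)$ (resp.\ of $B_u(F)$ in the self-conjugate case) to produce a surjection onto $C(\G)$ that intertwines the comultiplications, which it suffices to check on the generators $u_{j,k}$. You also correctly isolate the only nontrivial input, namely Woronowicz's unitarizability theorem guaranteeing the existence of $F$, which the paper itself takes for granted in Remark \ref{conjugateremark}.
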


\begin{Remark}[Choice of conjugate]\label{unitarizationremark}
The choice of conjugate \eqref{conjugatedef} is not unique but the unitary equivalence class of $\bar{u}$ does not depend on the choice $F$; the requirement is that $\bar{u}$ is \emph{unitary}. For $B_u(Q)$ we could clearly choose the matrix $F$ appearing in the definition $Q=F^*F$ but there is another possibility. For any representation $v$ of a compact matrix quantum group $\G$, there is a positive invertible matrix $Q_v$ such that  
\begin{equation}\label{sqrootconjugate}
\bar{v}:=(Q_v^{1/2}\otimes\bone)v^c(Q_v^{-1/2}\otimes\bone)
\end{equation}
is unitary. When $v=u$ is the defining representation of $\G$, the matrix $Q_v=Q$ coincides with the one for which $C(\G)$ is a quotient of $A_u(Q)$. For any scalar $\lambda>0$, the matrix $\lambda Q$ does the same job and hence $A_u(\lambda Q)\cong A_u(Q)$.  
\end{Remark}

\begin{Remark}[Terminology] The name ``orthogonal" quantum group attached to $B_u(F)$ is motivated by the fact that $F=\bone\in\GeL(n,\C)$ gives the so-called ``free orthogonal group" $\On^+(n)$ which is obtained from the classical orthogonal group by just removing commutativity of the $u_{j,k}$'s. However, the \emph{classical} orthogonal groups $\On(n)=\On(n,\R)$ cannot be obtained as a $B_u(F)$ since choosing the $F$ which gives commutativity does not impose selfadjointness of the generators $u_{j,k}$. Hence ``universal quantum groups with self-conjugate defining representation" would be a more appropriate name for $B_u(F)$. On the other hand, the classical \emph{unitary} groups $\Un(n)$ (as well as their free analogues) can indeed be obtained from the \emph{unitary} quantum groups $A_u(Q)$.
\end{Remark}

As in the classical case, there is a notion of ``irreducible" representations of quantum groups. For such a representation $v\in\Bi(\GH_v)\otimes C(\G)$, the action on $\Bi(\GH_v)$ given by
$$
\Bi(\GH_v)\ni A\to v(A\otimes\bone)v^* 
$$
leaves only the scalar multiplies of the identity operator ``invariant", i.e. $v(A\otimes \bone)v^*=A\otimes\bone$ implies $A\in\C\bone$. Moreover, there is only one invariant state on $\Bi(\GH_v)$ for this action. If $Q_v$ is the positive invertible matrix such that \eqref{sqrootconjugate} is unitary and $\Tr(Q_v)=\Tr(Q^{-1}_v)$ then the unique invariant state is given by the density matrix
\begin{equation}\label{invstate}
\rho_v:=\frac{Q_v^t}{\Tr(Q_v)}.
\end{equation}
For a classical compact group, $\rho_v=\bone/\dim(\GH_v)$ is the only possibility.  

When discussing a compact matrix quantum group, we shall assume that the defining representation $u$ is irreducible. For $B_u(F)$ we have $u=(FF^c)u(FF^c)^{-1}$, i.e. $FF^c$ intertwines the defining representation $u$ with itself. We have to assume $FF^c$ to be a scalar multiple of the identity matrix in order to have $u$ irreducible. For $A_u(Q)$ it is enough that $Q>0$ for $u$ to be irreducible.
\begin{Remark}[$B_u(F)$ has no classical analogue]\label{GlebschGordanrem}
Recall that $\SU(2)$ has self-conjugate defining representation, whereas this is not true for $\SU(n)$ when $n\geq 3$. Not only does $B_u(F)$ have self-conjugate defining representation $(u,\GH)$, but the tensor products $\GH_m\otimes\GH_l$ decompose into irreducibles in exactly the same way as for $\SU(2)$. One may therefore say that $B_u(F)$ is a ``higher-dimensional quantum $\SU(2)$ group", in general lacking a classical counterpart. 
\end{Remark}

\begin{Remark}[Notation] For comparison with the literature we mention that our notation is that of \cite{VaDW}. Some authors follow \cite[Déf. 1]{Ban4} and write $A_o(F):=B_u(F)$ for the orthogonal group while for the unitary group they write $A_u(F):=A_u(Q)$ where in that latter case $F^*F=Q$. Sometimes also $\On^+_F$ and $\Un^+_F$ are used in place of $A_o(F)$ and $A_u(F)$, and the latter refer then to the $C^*$-algebras $C(\On^+_F)$ and $C(\Un^+_F)$. Finally, there is a notion of ``dual" (discrete) quantum group for any compact quantum group, and the duals of $B_u(F)$ and $A_u(Q)$ are referred to as the ``universal orthogonal free group" and ``universal unitary free group"  respectively, denoted by $\F O(F)$ and $\F U(F)$. 
\end{Remark}

\subsection{The $Q$-sphere condition for quantum groups}
Similar to how we obtained the sequence $(\GH_m)_{m\in\N_0}$ of Stinespring spaces from a quantum channel in §\ref{Stinesection}, the defining representation $(u,\GH)$ of a compact quantum group $\G$ generates a sequence $(\GH_m)_{m\in\N_0}$ of irreducible representations $(u^{(m)},\GH_m)$ of $\G$, satisfying $\GH_{m+l}\subseteq\GH_m\otimes\GH_l$. Namely, we could take $\GH_m$ to be the linear span of products $z_\mathbf{k}$ where $|\mathbf{k}|=m$ and $z_j:=u_{1,j}$ are the elements of the first row of the defining unitary $u$. The representation $u^{(m)}$ of $\G$ on $\GH_m$ is given by
$$
u^{(m)}=(p_m\otimes\bone)u^{\otimes m}(p_m\otimes\bone)\in\Bi(\GH_m)\otimes C(\G),
$$
where $p_m:\GH^{\otimes m}\to\GH_m$ is the projection and $u^{\otimes m}\in\Bi(\GH^{\otimes m})\otimes C(\G)$ is the matrix whose entries in the product basis is given by
$$
u_{\mathbf{j},\mathbf{k}}=u_{j_1,k_1}\cdots u_{j_m,k_m}.
$$
Suppose that $\G\subseteq A_u(Q)$ with $Q$ equal to its transpose $Q^t$ (otherwise replace $Q$ by $Q^t$ in the following). From results described in detail in \cite[§1]{Ban6} one gets that taking $v=u^{\otimes m}$ in \eqref{sqrootconjugate} gives $Q_v=Q^{\otimes m}$. The operator $Q^{\otimes m}$ on $\GH^{\otimes m}$ preserves the subspace $\GH_m$. With $v=u^{(m)}$ in \eqref{sqrootconjugate} we get $Q_v=Q_m^t:=(Q^{\otimes m}p_m)^t$. 

The following is the reason why we use the same notation $Q_m$ here as for the correlation matrix in our formulation of detailed balance.
\begin{prop}\label{propGlongQ} Suppose that $\G\subseteq A_u(Q)$ with $Q_{1,1}=1$ and that a basis $e_1,\dots,e_n$ of $\GH$ in which $Q=Q^t$ is chosen such that $e^{\otimes m}_1\in\GH_m$ for all $m$. Then, for all $m\in\N$, the first-row elements $z_1:=u_{1,1},\dots,z_n:=u_{1,n}$ satisfy the $Q$-sphere condition
\begin{equation}\label{GlongQsphere}
\sum_{|\mathbf{j}|=m=|\mathbf{k}|}Q^{\mathbf{k},\mathbf{j}}z_\mathbf{j}^*z_\mathbf{k}=\bone,
\end{equation}
where $Q^{\mathbf{k},\mathbf{j}}:=\bra e_\mathbf{k}|Q_m^{-1}e_\mathbf{j}\ket$.
\end{prop}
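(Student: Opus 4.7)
The plan is to read the $Q$-sphere condition off the $(\mathbf{1},\mathbf{1})$-entry of an operator identity that follows from unitarity of the conjugate representation of $u^{\otimes m}$, where $\mathbf{1}:=1\cdots 1$ denotes the multi-index of length $m$ with every component equal to $1$. Since $\G\subseteq A_u(Q)$, the representation-theoretic discussion preceding the proposition identifies the Woronowicz matrix of $v:=u^{\otimes m}$ as $Q_v=Q^{\otimes m}$, so that
$$
\overline{u^{\otimes m}}:=\big((Q^{\otimes m})^{1/2}\otimes\bone\big)(u^{\otimes m})^c\big((Q^{\otimes m})^{-1/2}\otimes\bone\big)
$$
is unitary. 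The unitarity $\overline{u^{\otimes m}}(\overline{u^{\otimes m}})^*=\bone\otimes\bone$ rearranges to
$$
(u^{\otimes m})^c\big((Q^{\otimes m})^{-1}\otimes\bone\big)(u^{\otimes m})^t=(Q^{\otimes m})^{-1}\otimes\bone,
$$
and this is my starting identity.

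Next I would descend this identity from $\GH^{\otimes m}$ to $\GH_m$. Because $\GH_m\subseteq\GH^{\otimes m}$ is a subrepresentation of $\G$ and finite-dimensional representations of compact quantum groups are completely reducible, the orthogonal projection $p_m$ is an intertwiner for $u^{\otimes m}$; in particular $p_m$ commutes with $u^{\otimes m}$, and, using $p_m=p_m^t$ in the real-valued basis chosen so that $Q=Q^t$, it also commutes with the entrywise-adjoint $(u^{\otimes m})^c$ and with $(u^{\otimes m})^t$. Combined with the standing hypothesis that $Q^{\otimes m}$ commutes with $p_m$, this lets me shuffle $p_m$'s freely through the identity above to obtain
$$
(u^{\otimes m})^c\,(Q_m^{-1}\otimes\bone)\,(u^{\otimes m})^t=Q_m^{-1}\otimes\bone,
$$
where $Q_m^{-1}:=p_m(Q^{\otimes m})^{-1}=(Q^{\otimes m})^{-1}p_m$ is the pseudoinverse of $Q_m$.

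Reading off the $(\mathbf{1},\mathbf{1})$-entry gives
$$
\sum_{|\mathbf{r}|=m=|\mathbf{s}|}(Q_m^{-1})_{\mathbf{r},\mathbf{s}}\,z_\mathbf{r}^*z_\mathbf{s}=(Q_m^{-1})_{\mathbf{1},\mathbf{1}}\,\bone.
$$
The symmetry $Q=Q^t$ propagates to $Q_m=Q_m^t$, so $(Q_m^{-1})_{\mathbf{r},\mathbf{s}}=(Q_m^{-1})_{\mathbf{s},\mathbf{r}}=Q^{\mathbf{s},\mathbf{r}}$, matching the left-hand side of \eqref{GlongQsphere} after relabeling $\mathbf{r}\leftrightarrow\mathbf{k}$, $\mathbf{s}\leftrightarrow\mathbf{j}$. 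The assumption $e_\mathbf{1}:=e_1^{\otimes m}\in\GH_m$ then yields $Q_m^{-1}e_\mathbf{1}=(Q^{\otimes m})^{-1}e_\mathbf{1}=(Q^{-1}e_1)^{\otimes m}$, so $(Q_m^{-1})_{\mathbf{1},\mathbf{1}}=((Q^{-1})_{1,1})^m$, which equals $1$ by the normalization $Q_{1,1}=1$ in the chosen basis. The step I expect to be most delicate is the commutation of $p_m$ with $(u^{\otimes m})^c$ and $(u^{\otimes m})^t$: this is the only place where the symmetry hypothesis $Q=Q^t$ is genuinely exploited, and it is what allows the projection to pass cleanly through the conjugate matrix so that the $Q^{\otimes m}$-identity collapses onto its pseudoinverse $Q_m^{-1}$ on $\GH_m$.
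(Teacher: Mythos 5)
Your proof is correct in substance and uses the same underlying mechanism as the paper's --- unitarity of the unitarized conjugate of a tensor-power representation, read off at the $(\mathbf{1},\mathbf{1})$-entry --- but you apply it to $v=u^{\otimes m}$ with $Q_v=Q^{\otimes m}$ and then project the resulting identity from $\GH^{\otimes m}$ down to $\GH_m$, whereas the paper's displayed proof applies \eqref{sqrootconjugate} directly to the subrepresentation $v=u^{(m)}$, for which $Q_v=Q_m^t$, so that the identity $u^{(m)c}(Q_m^{-1})^tu^{(m)t}=(Q_m^{-1})^t$ of \eqref{unversalmatrixgrouprel} already lives on $\GH_m$ and no projection has to be moved anywhere. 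Your route is essentially the one the paper itself records in the Remark immediately following the proposition. What the paper's main proof buys is that the only representation-theoretic input is the single fact $Q_{u^{(m)}}=Q_m^t$ from \cite{Ban6}; what yours buys is that you only need the multiplicativity $Q_{u^{\otimes m}}=Q^{\otimes m}$, at the price of the descent step.

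Two caveats. First, your justification that $p_m$ commutes with $(u^{\otimes m})^c$ is not right as stated: conjugating the intertwiner relation $p_mu^{\otimes m}=u^{\otimes m}p_m$ shows that it is $p_m^c=\overline{p_m}$, the projection onto $\overline{\GH_m}\subseteq\overline{\GH}^{\otimes m}$, that commutes with $(u^{\otimes m})^c$, and the hypothesis $Q=Q^t$ makes $Q$ real but says nothing about $p_m$ being a real matrix. This does not sink the argument, because the first-row elements absorb either projection --- $z_\mathbf{k}=\sum_\mathbf{j}(p_m)_{\mathbf{k},\mathbf{j}}z_\mathbf{j}$ and hence $z_\mathbf{k}^*=\sum_\mathbf{j}(p_m^c)_{\mathbf{k},\mathbf{j}}z_\mathbf{j}^*$ --- so the coefficient matrix in your double sum may be replaced by $Q_m^{-1}$ in any case; this is exactly how the paper's Remark carries out the reduction, without commuting $p_m$ through $(u^{\otimes m})^c$ at all. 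Second, the final normalization $(Q^{-1})_{1,1}=1$ does not follow from $Q_{1,1}=1$ for a general positive $Q$ with $Q=Q^t$; one needs $Qe_1=e_1$ (e.g.\ $Q$ diagonal), which is the paper's standing convention after Lemma \ref{LemmaChoidiago} and is used tacitly in the paper's own proof as well, where the same normalization enters as $Q^{\mathbf{1},\mathbf{1}}=1$.
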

\begin{proof} Equation \eqref{sqrootconjugate} with $v=u^{(m)}$ can be written as (for $m=1$, this is an equivalent way of writing the defining relations for $A_u(Q)$)
\begin{equation}\label{unversalmatrixgrouprel}
u^{(m)t}Q_m^tu^{(m)c}(Q^{-1}_m)^t=\bone=Q_m^tu^{(m)c}(Q^{-1}_m)^tu^{(m)t},
\end{equation}
where $A^t$ denotes the transpose of a matrix $A$, and we omit the $\otimes\bone$ factor. The second equation in \eqref{unversalmatrixgrouprel} can be rewritten as $(Q^{-1}_m)^t=u^{(m)c}(Q^{-1}_m)^tu^{(m)t}$, which yields
$$
Q^{\mathbf{s},\mathbf{r}}=\sum^n_{|\mathbf{j}|=m=|\mathbf{k}|}(u^{(m)}_{\mathbf{r},\mathbf{j}})^*Q^{\mathbf{k},\mathbf{j}}u^{(m)}_{\mathbf{s},\mathbf{k}}.
$$
Setting $\mathbf{r}=\mathbf{s}=1\cdots 1$ we get \eqref{GlongQsphere}. 
\end{proof}

\begin{Remark} We can also deduce \eqref{GlongQsphere} by taking $v=u^{\otimes m}$ in \eqref{sqrootconjugate}, which gives, since $Q^{\otimes m}$ is diagonal,
$$
\bone=\sum_{|\mathbf{k}|=m}z_\mathbf{k}^*(Q^{\otimes m})^{-1}_{\mathbf{k},\mathbf{k}}z_\mathbf{k}.
$$
Using $z_\mathbf{k}^*=\sum_{|\mathbf{j}|=m}(p_m)_{\mathbf{k},\mathbf{j}} z_\mathbf{j}^*$ or $z_\mathbf{j}=\sum_{|\mathbf{j}|=m}(p_m)_{\mathbf{k},\mathbf{j}} z_\mathbf{k}$ together with $Q_m=Q^{\otimes m}p_m$, we again obtain \eqref{GlongQsphere}. 
\end{Remark}

\subsection{Interpretation}\label{interpretsec}
As we have seen, the $Q$-sphere condition discussed in this paper is deeply connected to the theory of compact quantum groups. Applications of our detailed-balance condition will take further advantage of this kind of quantum symmetry, so let us make a few comments on this.

If the evolution described by $\Phi$ is the result of very complicated interactions which appear \emph{completely random} on the relevant time scale (cf. \cite{AcLu1}), then there should be no other relations than the unitality condition \eqref{minimalrels} and the $Q$-sphere condition \eqref{choiceminimalrels} imposed by the Kraus operators. In that case the $n$-tuple $K_1^*,\dots,K_n^*$ satisfy, \emph{by definition}, the same relations as the first row of the defining representation of the universal unitary quantum group $A_u(Q)$, and we say that $\Phi$ has $A_u(Q)$-\textbf{symmetry}.

Free commutation relations holds if the Kraus operators $K_j\in\Bi(\Hi_0)$ are realized by large random matrices. The well-established strategy for using random matrices in quantum physics \cite{Wign1}, \cite{Guhr1}, \cite{Pain1} is to identify the symmetries of the system (such as time-reversal symmetry, Hermiticity of the Hamiltonian etc.) and then restrict the ensemble of random matrices accordingly. If the result fails to reproduce experimental data then one may conclude that some additional symmetry is present which has not been taken into account. Similarly, we here suggest that imposing restrictions of these matrices such that \eqref{minimalrels} holds enables the random dissipative dynamics $\Phi$ to drive the system into nontracial states.

We now go one step further and identify symmetries, the result being that the additional symmetries make the ``ensemble of random matrices" consist of matrices which are in fact not random at all. 
\begin{dfn}\label{Gsymmdef}
We say that a quantum channel $\Phi$ has $\G$-\textbf{symmetry} for some compact matrix quantum group $\G$ if the adjoints $K_1^*,\dots,K_n^*$ of the Kraus operators of $\Phi$ satisfy the same relations as the first row of the defining representation of $\G$.
\end{dfn}
From the definition of the Stinespring spaces $\GH_m$ in §\ref{Stinesection}, it follows that if $\Phi$ has $\G$-symmetry then each $\GH_m$ carries an irreducible representation of $\G$. 

As mentioned, any compact matrix quantum group is a ``quantum subgroup" of $A_u(Q)$ for some $Q\in\GeL(n,\C)$ meaning that, if $\Phi$ has $\G$-symmetry, there will always be a $Q$ such that the $Q$-sphere condition \eqref{longQsphere} holds (due to Proposition \ref{propGlongQ}).

If $\rho_0$ has $\Phi$-symmetric correlations $Q$, the Stinespring space $\GH_m$ can be endowed with the $Q_m$-inner product without spoiling the property $\GH_{m+l}\subset\GH_m\otimes\GH_l$ \cite[§4.7]{An6}. If $\Phi$ has $\G\subset A_u(Q)$-symmetry (which in particular implies that detailed balance holds, in view of Proposition \ref{propGlongQ}) then each $\GH_m$ carries an irreducible \emph{unitary} representation precisely when the inner product is the one defined by $Q_m$, which is the inner product on $\GH_m$ as a subspace of the GNS space $\Hi_{\rho_0}$. 

Consider now the Stinespring spaces $\overline{\GH}_m$ of the time-reversal channel $\bar{\Phi}$. We can identify $\overline{\GH}_m$ with the complex conjugate of $\GH_m$. Again the inner product has to be changed to make the $\G$-representation on $\overline{\GH}_m$ unitary; this time we have to use $(Q_m^{-1})^t$. In another paper we shall be concerned with determining which \emph{inequivalent} representations of $\G$ occur as a space $\GH_m$ or $\overline{\GH}_m$ or as a ``product" of such spaces. The point is that we can identify $\Bi(\GH_m)$ with the space of trajectories $K_\mathbf{j}^*K_\mathbf{k}$ for the forward evolution, while $\Bi(\overline{\GH}_m)$ contains the trajectories $\bar{K}_\mathbf{j}^*\bar{K}_\mathbf{k}$ of the backward evolution. The Feynman path integral over inequivalent trajectories becomes a sum in which each term is weighted in terms of a $\G$-invariant ``Laplacian" on the ``manifold" of trajectories. It is then an interesting question in what cases the backward trajectories are inequivalent to the forward ones, in this sense of $\G$-representations. If every backward path can be reconstructed from a combination of forward ones in a $\G$-invariant way, it is sufficient to include the forward trajectories in the Feynman sum, since they already form a complete basis of eigenvectors for the Laplacian.

The above paragraph is the motivation of our notion of ``time-reversal" (Definition \ref{TRinvdef}). If $\Phi$ is $\G$-symmetric for some compact matrix quantum group $\G$ then time-reversal invariance of $\Phi$ is equivalent to $\G\subset B_u(F)$ (i.e. $\G$ has self-conjugate defining representation). Interestingly, in case of time reversal the Kraus operators \emph{have} to satisfy some commutation relations, and in this way we see a relation between symmetry and statistics. 

\begin{Remark} For a closed quantum system, time reversal is simply given by an anti-unitary operator $\Theta$ with $\Theta^2=\pm\bone$, so that the time reversal of an operator $A$ is given by $\Theta A\Theta^{-1}$. In particular this is true for the unitary time-evolution operator $A=W$. An especially nice situation is when $\Theta$ is simply given by complex conjugation. If the Hilbert space is of the form $\Hi_0\otimes\GH$ with $\GH$ finite-dimensional, one may want to take $\Theta$ as ``complex conjugation" of matrices, i.e. as entrywise $*$-involution. However, as we have observed, $\Theta W\Theta^{-1}$ is in that case not a unitary in general. It therefore makes sense to let $\Theta\psi=\overline{(\bone\otimes F^*)\psi}$ instead (for vectors $\psi$, where $F$ is as in Definition \ref{TRinvdef} and the bar denotes complex conjugation), which would satisfy $\Theta^2=\pm\bone$ if $F^2=\pm\bone$; this can always be arranged (up to isomorphism) if $\Phi$ has $B_u(F)$ symmetry \cite[Eqs. (5.3),(5.4)]{BDV1}.

A recent paper \cite{AZZ1} refers to the channel corresponding to $\Theta W\Theta^{-1}$ as the ``environmental time reversal" of $\Phi$.
\end{Remark}

For commuting Kraus operators we can only have a classical symmetry group $G$. Here $Q=\bone$ is the only possibility for detailed balance, so the channel is just the Hilbert-Schmidt adjoint of its time-reversal, $\bar{K}_j=K_j^*$. The backward process can be reconstructed perfectly from the forward one and, more sensibly, the forward process can be derived from looking backwards. In that sense we have perfect time reversal. Since the Kraus operators correspond to the first row of the definition representation of $G$, they generate the algebra $C(G/K)$ of continuous functions on a $G$-homogeneous space $G/K$. We see that in the context of quantum detailed balance, classical symmetry can describe a very limited type of systems.

The simple structure of the representation theory of $\SU(2)$ makes its occurrence in quantum mechanics very fortunate. In comparison, $\SU(n)$ for $n\geq 3$ is much more complicated and the quantum version $\SU_q(n)$ is rather horrible. As mentioned in Remark \ref{GlebschGordanrem}, the ``higher-dimensional quantum $\SU(2)$ group" $B_u(F)$ has the same Clebsch-Gordan structure of its representations as $\SU(2)$. It is therefore very interesting that $B_u(F)$ seems to have a straightforward interpretation as time-reversal symmetry.

The rather weak notion of time-reversal in Definition \ref{UCMQGdef} requires that there is at least some \emph{self-similarity} in the environment interaction, in contrast to the ``white noise" $A_u(Q)$. The similarity between $B_u(F)$ and $\SU(2)$ suggests that we should view the algebra $\GT^{(0)}$ generated by elements of the form $K_\mathbf{j}K_\mathbf{k}^*$ with $|\mathbf{j}|=|\mathbf{k}|$ as defining a deformed sphere, since if $\Phi$ has $\SU(2)$ symmetry then $\GT^{(0)}\cong C(\Sb^2)$, while for $\SU_q(2)$ we have $\GT^{(0)}\cong C(\Sb^2_q)$. The deformed sphere $\Sb_q^2$ behaves also like a fractal \cite{EIS1}. Either picture suggests that the time evolution is not perfectly repeating itself, and it is $Q\ne \bone$ which breaks perfect time-reversal symmetry. In summary:
\begin{enumerate}[$\bullet$]
\item{The symmetry of $A_u(Q)$ is the minimal one for detailed balance to hold.}
\item{The symmetry $B_u(F)\subset A_u(Q)$ may be regarded as a time-reversal symmetry, and this requires non-free commutation relations among the Kraus operators.}
\item{Any other algebraic relations may spoil quantum-group symmetry or result in a different symmetry $\G\subset A_u(Q)$.}
\end{enumerate}

\section{References}

\bibitem[AcLu1]{AcLu1} Accardi L, Lu YG. The Wigner semi-circle law in quantum electro dynamics. Commun. Math. Phys. 180, 605-632 (1996).

\bibitem[Agar1]{Agar1} Agarwal GS. Open quantum markovian systems and the microreversibility. Z. Physik 258, Vol pp. 409-422 (1973).

\bibitem[ALMZ1]{ALMZ1} Albash T, Lidar DA, Marvian M, Zanardi P. Fluctuation theorems for quantum processes. Phys Rev E. Vol 88, 032146 (2013).

\bibitem[AlFa1]{AlFa1} Alicki R, Fannes M. Quantum dynamical systems. Oxford University Press, Oxford (2001).

\bibitem[AlLe1]{AlLe1} Alicki R, Lendi K. Quantum dynamical semigroups and applications. Lecture Notes in Phys. Vol 717, Springer-Verlag Berlin Heidelberg (2007).

\bibitem[An1]{An1} Andersson A. Operator deformations in quantum measurement theory. Lett. Math. Phys. Vol 104, Issue 4, pp. 415-430 (2014). 

\bibitem[An5]{An5} Andersson A. Dequantization via quantum channels.  arXiv: 1506.01453 (2015).

\bibitem[An6]{An6} Andersson A. Berezin quantization of noncommutative projective varieties. arXiv: 1506.01454 (2015).

\bibitem[ADF1]{ADF1} Angelova MN, Dobrev VK, Frank A. Revisiting the quantum group symmetry of diatomic molecules. The European Physical Journal D-Atomic, Molecular, Optical and Plasma Physics. Vol 31, Issue 1, pp. 27-37 (2004).

\bibitem[AnFr2]{AnFr2} Angelova MN, Frank A. Algebraic approach to thermodynamic properties of diatomic molecules. Physics of Atomic Nuclei. Vol 68, Issue 10, pp. 1625-1633 (2005). 

\bibitem[Ar3]{Ar3} Arveson W. The index of a quantum dynamical semigroup. Journal of functional analysis. Vol 146, Issue 2, pp. 557-588 (1996).

\bibitem[Ar7]{Ar7} Arveson W. Subalgebras of $C^*$--algebras. Acta Math. Vol 123, pp. 141-224 (1969).

\bibitem[AZZ1]{AZZ1} Aurell E, Zakrzewski J, Życzkowski K. Time reversals of irreversible quantum maps. arXiv:1505.02259 (2015).

\bibitem[Ban4]{Ban4} Banica T. Le groupe quantique compact libre $\Un(n)$. Commun. Math. Phys. Vol 190, pp. 143–172 (1997).

\bibitem[Ban6]{Ban6} Banica T. Representations of compact quantum groups and subfactors. Journal für die reine und angewandte Mathematik (Crelles Journal), Issue 509, pp. 167-198 (1999). 

\bibitem[BDV1]{BDV1} Bichon J, De Rijdt A, Vaes S. Ergodic coactions with large multiplicity and monoidal equivalence of quantum groups. Comm. Math. Phys. Vol 262, pp. 703-728 (2006).

\bibitem[BoDa1]{BoDa1} Bonatsos D, Daskaloyannis C. Quantum groups and their applications in nuclear physics. Progress in Particle and Nuclear Physics. Vol 43, pp. 537-618 (1999).

\bibitem[BKLRRT]{BKLRRT} Bonatsos D, Karoussos N, Lenis D, Raychev PP, Roussev RP, Terziev PA. Unified description of magic numbers of metal clusters in terms of the three-dimensional $q$-deformed harmonic oscillator. Physical Review A. Vol 62. Issue 1, 013203 (2000)

\bibitem[Bon1]{Bon1} Bonatsos D. Are $q$-bosons suitable for the description of correlated fermion pairs? Journal of Physics A: Mathematical and General. Vol 25, Issue 3, L101 (1992).

\bibitem[BrRo2]{BrRo2}  Bratteli O, Robinson D. Operator algebras and quantum statistical mechanics 2: Equilibrium states. Models in quantum statistical mechanics. Springer Science and Business Media (1997).

\bibitem[BP]{BP} Breuer HP, Petruccione F. The theory of open quantum systems. Oxford University Press, Oxford (2003).

\bibitem[Choi1]{Choi1} Choi MD. Completely positive linear maps on complex matrices. Lin. Alg.Appl. Vol 10, pp. 285-290 (1975).

\bibitem[Crook1]{Crook1} Crooks GE. Quantum operation time reversal. Physical Review A. Vol 77, Issue 3, p. 034101. (2008).

\bibitem[Crook4]{Crook4} Crooks GE. Entropy production fluctuation theorem and the nonequilibrium work relation for free energy differences. Physical Review E. Vol 60, Issue 3, p. 2721 (1999).

\bibitem[EIS1]{EIS1} Eckstein M, Iochum B, Sitarz A. Heat trace and spectral action on the standard Podleś sphere. Communications in Mathematical Physics. Vol 332, Issue 2, pp. 627-668 (2014).

\bibitem[FaRe1]{FaRe1} Fagnola F, Rebolledo R. Entropy production for quantum Markov semigroups. arXiv:1212.1366 (2012).

\bibitem[FIL1]{FIL1}  Frank A, Iachello F, Lemus R. Algebraic methods for molecular electronic spectra. Chemical physics letters. Vol 131, Issue 4, pp. 380-383 (1986).

\bibitem[Guhr1]{Guhr1} Guhr T, Müller-Groeling, A, Weidenmüller HA. Random-matrix theories in quantum physics: common concepts. Physics Reports. Vol 299, Issue 4, pp. 189-425 (1998).

\bibitem[Janz1]{Janz1} Janzing D. Decomposition of time-covariant operations on quantum systems with continuous and/or
discrete energy spectrum. Journal of mathematical physics. Vol 46, Issue 12, p. 122107 (2005).

\bibitem[Jarz1]{Jarz1} Jarzynski C. Nonequilibrium equality for free energy differences. Phys. Rev. Lett. Vol 78, p. 2690 (1997).

\bibitem[Jarz2]{Jarz2} Jarzynski C. Equalities and inequalities: irreversibility and the second law of thermodynamics at the nanoscale. Annu. Rev. Condens. Matter Phys. Vol 2, pp. 329-351 (2011). 

\bibitem[KiNe1]{KiNe1}  Kibler M, Négadi T. On the $q$-analogue of the hydrogen atom. Journal of Physics A: Mathematical and General. Vol 24, Issue 22, 5283 (1991).

\bibitem[KiNe3]{KiNe3} Kibler M, Négadi T. On quantum groups and their potential use in mathematical chemistry. Journal of Mathematical Chemistry. Vol 11, Issue 1, pp. 13-25 (1992).

\bibitem[KlS]{KlS} Klimyk AU, Schmüdgen K. Quantum groups and their representations. Vol. 552, Springer, Berlin (1997).

\bibitem[KFGV]{KFGV} Kossakowski A, Frigerio A, Gorini V, Verri M. Quantum detailed balance and KMS condition. Comm. Math. Phys. Vol 57, pp. 97-110 (1977).

\bibitem[Kum1]{Kum1}  Kümmerer B. Quantum Markov processes and applications in physics. Lecture Notes in Mathematics. Vol 1866, Springer, Berlin, pp. 259-328 (2005).

\bibitem[LOMI]{LOMI} Liu YX, Özdemir ŞK, Miranowicz A, Imoto N. Kraus representation of a damped harmonic oscillator and its application. Physical Review A. Vol 70, Issue 4, 042308  (2004). 

\bibitem[MHP1]{MHP1} Manzano G, Horowitz JM, Parrondo JMR. Nonequilibrium potential and fluctuation theorems for quantum maps. arXiv:1505.04201v1 (2015).

\bibitem[MeWo1]{MeWo1} Mendl CB, Wolf MM. Unital quantum channels - convex structure and revivals of Birkhoff's theorem. Comm. Math. Phys. Vol 289, pp. 1057-1096 (2009).

\bibitem[Monn1]{Monn1} Monnai T. Microscopic reversibility of quantum open systems. Journal of Physics A: Mathematical and Theoretical. Vol 45, Issue 12, p. 125001 (2012).

\bibitem[NCSB1]{NCSB1} Nielsen MA, Caves CM, Schumacher B, Barnum H. Information-theoretic approach to quantum error correction and reversible measurement. Proceedings of the Royal Society of London. Series A: Mathematical, Physical and Engineering Sciences. Vol 454, Issue 1969, pp. 277-304 (1998).

\bibitem[Ojim1]{Ojim1} Ojima I. Micro-macro duality in quantum physics. arXiv: math-ph/0502038 (2005).

\bibitem[Pain1]{Pain1} Pain JC. Random-matrix theory and complex atomic spectra. Chinese journal of physics. Vol 50, Issue 4, p. 523 (2012).

\bibitem[Rast1]{Rast1} Rastegin AE. Non-equilibrium equalities with unital quantum channels. Journal of Statistical Mechanics: Theory and Experiment. Issue 06, P06016 (2013).

\bibitem[RaZy1]{RaZy1} Rastegin AE, Życzkowski K. Jarzynski equality for quantum stochastic maps. Physical Review E. Vol 89, Issue 1, 012127 (2014).

\bibitem[Rayc1]{Rayc1} Raychev P. Quantum groups: Application to nuclear and molecular spectroscopy. Advances in quantum chemistry. Vol 26, pp. 239-357 (1995).

\bibitem[Schu1]{Schu1} Schumacher B. Sending entanglement through noisy quantum channels. Physical Review A. Vol 54, Issue 4, p. 2614 (1996).

\bibitem[ShSo1]{ShSo1} Shalit OM, Solel B. Subproduct systems. Documenta Mathematica. Vol 14, pp. 801-868 (2009).

\bibitem[Stor1]{Stor1} Størmer E. The analogue of Choi matrices for a class of linear maps on Von Neumann algebras. arXiv: 1412.8598 (2014).

\bibitem[Timm1]{Timm1} Timmermann T. An invitation to quantum groups and duality: from Hopf algebras to multiplicative unitaries and beyond. European Mathematical Society (2008). 

\bibitem[VaDW]{VaDW} Van Daele A, Wang SZ. Universal quantum groups. International J. Math Vol7, Issue 2, pp. 255-264 (1996).

\bibitem[Wang3]{Wang3} Wang SZ. Structure and isomorphism classification of compact quantum groups $A_u (Q)$ and $B_u (Q)$. arXiv: math/9807095v2 (2000).

\bibitem[Wign1]{Wign1} Wigner E. On the distribution of the roots of certain symmetric matrices. Ann. Math. Vol 67, pp. 325-327 (1958).

\bibitem[Wor1]{Wor1} Woronowicz SL. Compact matrix pseudogroups. Comm. Math. Phys. Vol 111, pp. 613-665 (1987).

\bibitem[Wor3]{Wor3} Woronowicz SL. A remark on compact matrix quantum groups. Lett. Math. Phys. Vol 21, pp. 35-39 (1991).

\end{document}